\documentclass[lettersize,journal]{IEEEtran}
\ifCLASSINFOpdf
\else
\fi

\usepackage{tabularx}
\usepackage{graphics}
\usepackage{bbm}
\newtheorem{theorem}{{Theorem}}
\newtheorem{lemma}{{Lemma}}

\usepackage{stfloats}

\usepackage{CJK}
\usepackage{algorithm}
\usepackage{algorithmic}
\usepackage{amsmath}
\usepackage{amssymb}
\usepackage{psfrag}
\usepackage{graphicx}
\usepackage{epstopdf}
\usepackage{multirow}
\usepackage{stfloats}
\usepackage{cite}
\usepackage{subcaption}
\usepackage{color}
\usepackage[normalem]{ulem}
\usepackage{arydshln}
\usepackage{enumerate}
\usepackage{bbm}
\usepackage{verbatim}
\usepackage{gensymb}

\newtheorem{proposition}{{Proposition}}[section]
\newtheorem{corollary}{Corollary}[section]
\newtheorem{remark}{{Remark}}

\title{On the Capacity Region of Additive-Multiplicative MAC with Heterogeneous Input Constraints}

\author{Qianqian Zhang and Ying-Chang Liang, \emph{Fellow, IEEE} \\

\thanks{
Q.~Zhang and Y.-C. Liang are with University of Electronic Science and Technology of China (UESTC), Chengdu 611731, China (e-mails: qqzhang@uestc.edu.cn and liangyc@ieee.org).

}}

\begin{document}

\maketitle

\vspace{-3em}
\begin{abstract}

This paper characterizes the capacity region of a two-user additive-multiplicative multiple access channel (AM-MAC) under heterogeneous input constraints. This model captures the fundamental limits of symbiotic radio, where an active primary transmitter (PT) conveys information via active transmission subject to an average power constraint, while a passive backscatter device (BD) modulates signals through backscattering under a peak amplitude constraint.
Our main results are threefold. Firstly, we prove that the sum-rate capacity equals the PT's point-to-point capacity, achieved when the PT employs Gaussian signaling and the BD acts as a pure reflector to assist the PT's transmission.
Secondly, to achieve the BD's maximum achievable rate, the PT must adopt a constant-envelope signaling strategy, while the optimal BD distribution exhibits a concentric-circle structure with a uniform phase.
Thirdly, for the remaining boundary points, we establish that the optimal PT signal consists of a continuous uniform phase and a discrete amplitude, whereas the optimal BD distribution is fully discrete.
Finally, numerical results are provided to characterized the capacity region by solving a specialized nonlinear optimization problem. To demonstrate the practical implications, we also characterize an baseline rate pair and evaluate the overall performance of the AM-MAC.

\end{abstract}


\begin{IEEEkeywords}
Multiple access channel, average power constraint, peak amplitude constraint, additive-multiplicative coupling, symbiotic radio.
 \end{IEEEkeywords}

\section{Introduction}

The multiple access channel (MAC) is a fundamental information-theoretic model in wireless networks that enables simultaneous communication from multiple users to a common receiver. In its canonical form, the Gaussian MAC models independent sources transmitting to a common receiver over a shared wireless channel via linear superposition. Under the assumption of additive white Gaussian noise (AWGN), the capacity region defined by the set of all simultaneously achievable rate pairs is elegantly characterized by a polymatroid structure, which manifests as a pentagonal region in the two-user case~\cite{cover1999elements}. This landmark result has long informed the design of modern multiple access schemes, spanning from traditional orthogonal access to emerging non-orthogonal paradigms.
However, the conventional Gaussian MAC framework, premised on the principle of linear superposition, is increasingly insufficient to capture the non-linear and multiplicative interactions inherent in the sixth generation (6G) mobile network architectures and the Internet of Everything (IoE) deployments. 

Among the emerging 6G architectures, symbiotic radio (SR) has emerged as a transformative paradigm~\cite{nawaz2020non,chen2020vision,xiaohutowards}. In SR, a passive backscatter device (BD) modulates information bits by periodically backscattering the signal from a primary transmitter (PT)~\cite{liang2020symbiotic}. This creates a unique additive-multiplicative coupling at the receiver, where the received signal consists of a direct-link additive component and a backscatter-link multiplicative component that the PT's signal acts as the carrier for the BD's data. 
This unique additive-multiplicative coupling creates a mutually beneficial relationship, where the BD gains a free carrier for transmission, while the PT may benefit from additional multipath gain. 

From an information-theoretic perspective, SR forms a new channel model called additive-multiplicative MAC (AM-MAC). Beyond SR, this additive-multiplicative structure is intrinsically shared by other passive-reflection technologies, most notably reconfigurable intelligent surfaces (RIS) when employed for information transmission~\cite{zhang2021reconfigurable,9133134}.
On the other hand, the characterization of SR and RIS-aided systems should consider the heterogeneity of their input constraints. Specifically, the PT is typically a active node subject to an average power constraint, while the passive BD or RIS, due to its hardware limitations and reflection mechanism, is governed by a peak amplitude constraint. Due to these asymmetric constraints, the additive-multiplicative coupling causes the capacity region to diverge from its classic polytopic structure.

This paper provides a comprehensive characterization of the AM-MAC capacity region under heterogeneous input constraints. As shown in Fig.~\ref{fig:system},  the mathematical input-output relation is given by 
\begin{align}
 \label{eq:system model}
{Y} &= a X_1 +  X_1X_2 + {Z},
\end{align}
where $X_1, X_2 \in \mathbb{C}$ represent signals from PT and BD respectively, and $Z \sim \mathcal{CN}(0, \sigma^2)$ represents AWGN. The coefficient $a \in \mathbb{R}\setminus\{0\}$ characterizes the relative channel strength of the direct link compared to the cascaded backscatter link by assuming that any phase offset between the direct and backscatter signals is perfectly compensated. 
Considering the hardware characteristics, the PT is subject to an average power constraint $\mathbb{E}[|X_1|^2]\leq P$, while the passive BD operates under a peak amplitude constraint $|X_2|\leq1$, where $P$ is the maximum transmit power at PT. This heterogeneous constraint structure, coupled with the input-output relation in~\eqref{eq:system model},  distinguishes our model from the canonical additive MAC, i.e., $Y = X_1 + X_2 + Z$~\cite{cover1999elements}.

The investigation of additive-multiplicative coupling has evolved through several distinct stages.
In the degenerate case where $a = 0$~in \eqref{eq:system model}, the model reduces to a purely multiplicative MAC, which arises naturally in passive communications where the direct link is blocked. As for multiplicative MACs, early information-theoretic foundations for noiseless binary case are established in~\cite{cover1999elements}, while the AWGN case is later characterized in~\cite{pillai2011capacity} under homogeneous average power constraints.
Then, the capacity region of the purely multiplicative case with heterogeneous input constraints is investigated in~\cite{zhang2025on}, which reveals that the optimal PT amplitude follows a continuous Rayleigh distribution for sum-rate maximization, while other boundary points require discrete distributions. Beyond the purely multiplicative model, some efforts have been made to address the more general AM-MAC, i.e., $a \neq 0$. However, these studies often rely on simplifying assumptions. For instance, the analyses in~\cite{9518055} and~\cite{liu2018backscatter} are confined to pre-defined, finite-size constellations, instead of characterizing the fundamental capacity region under arbitrary input distributions.
In~\cite{9530367}, the information of the passive device is implicitly embedded in reflection patterns rather than being formulated as an explicit AM-MAC model. Furthermore, while the vector model in~\eqref{eq:system model} is consistent with the framework in~\cite{cheng2021degree}, that work focuses primarily on characterizing the degrees-of-freedom (DoF) in high signal-to-noise ratio (SNR) regimes, leaving the exact characterization of the capacity region as an open and critical problem. 


\begin{figure*}[t]
\centering
\includegraphics[width=1.25\columnwidth] {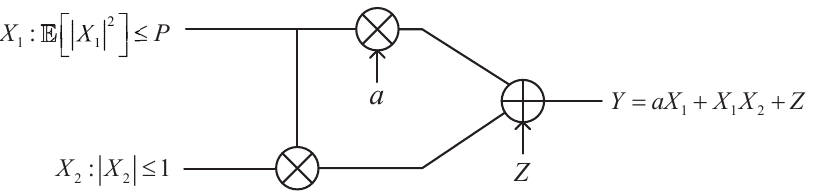}
\caption{A two-user AM-MAC model with heterogeneous input constraints.}
\label{fig:system}
\vspace{-1em}
\end{figure*}

The investigation into capacity-achieving distributions under peak amplitude constraints originates from the foundational study by Smith~\cite{smith1971information}. For a point-to-point AWGN channel subject to both peak amplitude and average power constraints, the optimal distribution is discrete consisting of a finite number of mass points~\cite{smith1971information}. 
This analytical framework is subsequently extended to the complex AWGN channel by Shamai and Bar-David~\cite{shamai1995capacity}. The results establish that the capacity-achieving distribution is characterized by a uniform independent phase and a discrete amplitude, which geometrically manifests as a finite set of concentric circles in the complex plane. More recently, the $n$-dimensional generalization of this problem is addressed in~\cite{dytso2019capacity}. The results confirm that the optimal distribution resides on a finite number of concentric $n$-spheres. 
Further, \cite{9518071} characterizes the AWGN MIMO capacity under a Euclidean-norm peak constraint and derives a rigorous lower bound on the required number of mass points.
Both upper and lower capacity bounds for multiple-antenna systems under amplitude-limited input constraints is characterized in~\cite{7470235}.
Subsequently, upper bounds on capacity for both scalar and vector AWGN channels with peak amplitude constraint are derived in~\cite{thangaraj2017capacity}.
Beyond standard AWGN scenarios, a generalized framework is established in~\cite{8070383} to prove the optimality of discrete distributions for fading channels and signal-dependent noise models under amplitude constraints. 
In~\cite{1435651}, a rigorous investigation into conditionally Gaussian channels demonstrates the necessary and sufficient conditions for discrete measures under joint bounded-input and average-cost constraints.
The study in~\cite{10251160} focuses on a practical hardware modeling and characterizes the capacity of point-to-point backscatter communications by incorporating realistic reflection coefficient constraints derived from load circuit characteristics.
For the additive Gaussian MAC, foundational studies in~\cite{mamandipoor2014capacity,ozel2012capacity} have shown that the entire boundary of the capacity region is achieved by discrete input distributions with a finite support. This fundamental discreteness contrasts with the continuous Gaussian codebooks commonly employed in standard MAC analysis.

This paper provides a rigorous and complete characterization of the capacity region for a two-user AM-MAC subject to heterogeneous input constraints. 
Our contributions are categorized into the following three fundamental aspects. Firstly, we investigate the sum-rate capacity of the AM-MAC. Our analysis reveals a unique phenomenon that the maximum sum-rate of the AM-MAC coincides with the point-to-point capacity of the PT link alone, provided the BD operates in a purely assistive mode.
Secondly, we characterize the boundary point corresponding to the maximum achievable rate of the BD, which requires that the PT adopts a constant-envelope signaling strategy and the optimal BD distribution possesses a concentric-circle structure with a uniform phase distribution, aligning with the classical results for peak-limited quadrature channels. 
Thirdly, for the remaining boundary points that define the trade-off between the PT and BD rates, we provide a deep structural analysis of the optimal input distributions. We establish that the optimal PT signal consists of a continuous uniform phase over $[-\pi, \pi)$ but, crucially, a discrete amplitude distribution. This discretization of the PT's amplitude is a direct consequence of the multiplicative coupling from the peak-limited BD. We rigorously prove that the optimal BD distribution is fully discrete in the complex plane, with its support comprising a finite number of isolated mass points.
Finally, numerical results are provided to characterize the capacity region of the AM-MAC under heterogeneous input constraints. 

The organization of this paper is as follows. Section II demonstrates the main results of the AM-MAC capacity region under heterogeneous constraints, encompassing the converse proof and the construction of optimal input distributions. Section III demonstrates the numerical visualization of the characterized region. Finally, conclusions are provided in Section IV.

The notations used in this paper are summarized as follows. The sets of real numbers and complex numbers are denoted by $\mathbb{R}$ and $\mathbb{C}$, respectively. Other sets are represented by calligraphic letters, e.g., $\mathcal{X}$.
For a complex number $X$, $|X|$, $\angle X$, and $\Re(X)$ denote its magnitude, phase, and real part, respectively. 
The conjugate transpose of a vector $\mathbf{x}$ is denoted by $\mathbf{x}^H$.
The statistical expectation is denoted by $\mathbb{E}[\cdot]$, and the probability density function (PDF) of a random variable $X$ is represented by $f_X(\cdot)$.
We denote the differential entropy of a continuous random variable $X$ by $h(X)$, and the mutual information between random variables $X$ and $Y$ by $I(X; Y)$.
For a subset $\mathcal{A} \subseteq \mathbb{R}^n$, $\mathrm{cl}(\mathrm{conv}(\mathcal{A}))$ denotes the closure of its convex hull. The symbol $\triangleq$ is used for definitions.
For distributions, $\mathcal{CN}(\mu, \sigma^2)$ represents the complex Gaussian distribution with mean $\mu$ and variance $\sigma^2$, and $\mathcal U(a, b)$ denotes the uniform distribution over the interval $(a, b)$.

\section{The Main Results}

Let $R_1$ and $R_2$ denote the achievable transmission rates of the PT and BD, respectively. 
For a vanishingly small average error probability, the capacity region of AM-MAC, denoted by $\mathcal C_{\textrm{AM-MAC}}$, is the closure of the convex hull of all rate pairs $(R_1,R_2)$ for which there exist independent input distributions satisfying the constraints $\mathbb{E}[|X_1|^2] \leq P$ and $|X_2| \leq 1$. Formally, we have
\begin{align*}
&\mathcal C_{\textrm{AM-MAC}}=\mathrm{cl}\left(\mathrm{conv}\left(\bigcup_{\mathbb E|X_1|^2\leq P, |X_2|\leq 1}\!\!\!\mathcal R(R_1,R_2;X_1, X_2)\!\right)\right),
\end{align*}
where $\mathcal R(R_1, R_2;X_1, X_2)$ is the set of all tuples $(R_1,R_2)$ satisfying:
\begin{align}
R_1 & \leq I(X_1;Y|X_2), \label{eq:system model formula1}\\
R_2 & \leq I(X_2;Y|X_1),\label{eq:system model formula2}\\
R_1+R_2 & \leq I(X_1,X_2;Y) \label{eq:system model formula3}.
\end{align}

In what follows, we provide a rigorous characterization of the AM-MAC capacity region by presenting both the converse analysis and the achievability under these heterogeneous input constraints.

\subsection{Converse Analysis}\label{sec:converse}

To establish the converse, we delineate the outer bound of the AM-MAC capacity region under heterogeneous input constraints. This involves proving that any achievable rate pair $(R_1, R_2)$ for AM-MAC must satisfy the inequalities in \eqref{eq:system model formula1}, \eqref{eq:system model formula2}, and \eqref{eq:system model formula3} for some independent input distributions of $X_1$ and $X_2$ subject to the respective average power and peak amplitude limits.

Let $\mathcal C_1^{(n)}$ and $\mathcal C_2^{(n)}$ denote the sequences of codebooks for PT and BD, respectively. We assume these codebooks achieve a rate pair
$(R_1, R_2)$ with a vanishingly small average error probability as the codeword length $n\rightarrow\infty$.
By invoking Fano's inequality and following the standard converse derivation for multiple access channels~\cite[pp. 539]{cover1999elements}, the achievable rates are upper-bounded by the following empirical mutual information terms:
\begin{align}
R_1 &\leq \frac{1}{n}\sum_{\ell=1}^{n}I(X_{1,\ell}^{(n)};Y_{\ell}^{(n)}|X_{2,\ell}^{(n)})+\epsilon_n,\label{eq:1}\\
R_2 & \leq \frac{1}{n}\sum_{\ell=1}^{n}I(X_{2,\ell}^{(n)};Y_{\ell}^{(n)}|X_{1,\ell}^{(n)})+\epsilon_n,\label{eq:2}\\
R_1+R_2 & \leq \frac{1}{n}\sum_{\ell=1}^{n}I(X_{1,\ell}^{(n)},X_{2,\ell}^{(n)};Y_{\ell}^{(n)})+\epsilon_n,\label{eq:3}
\end{align}
where $\epsilon_n = 0 $ as $n\rightarrow \infty$. Here, $\{X_{1,\ell}^{(n)}\}_{\ell=1}^n$ and $\{X_{2,\ell}^{(n)}\}_{\ell=1}^n$ denote the symbols transmitted at time index $\ell$ within a block of length $n$. These symbols are statistically independent for each $\ell$, as the PT and BD encode their respective messages without coordination. The PT's codeword $\{X_{1,\ell}^{(n)}\}_{\ell=1}^n$ belongs to the codebook $\mathcal C_1^{(n)}$, satisfying the average power constraint $\mathbb E [|X_{1,\ell}^{(n)}|^2]\leq P$, while the BD's codeword $\{X_{2,\ell}^{(n)}\}_{\ell=1}^n$ in $\mathcal C_2^{(n)}$ subject to the peak amplitude constraint $|X_{2,\ell}^{(n)} |\leq 1$ for all $\ell \in \{1, \dots, n\}$.
The received signal at the $\ell$-th time slot is modeled as
\begin{align}
Y_{\ell}^{(n)} =(a + X_{2,\ell}^{(n)})X_{1,\ell}^{(n)} + Z_{\ell}^{(n)},
\end{align}
where $Z_{\ell}^{(n)}\sim\mathcal{CN}(0,\sigma^2)$ represents the independent and identically distributed (i.i.d.) complex AWGN at the receiver.

To further simplify the bounds in~\eqref{eq:1},~\eqref{eq:2}, and~\eqref{eq:3}, we introduce a time-sharing random variable $Q$, which is uniformly distributed over the index set $\{1, 2, \dots, n\}$ and independent of the source messages. By conditioning on $Q$, the average of the mutual information terms over $n$ time slots can be expressed in a single-letter form. Specifically, \eqref{eq:1}, \eqref{eq:2}, and \eqref{eq:3} are reformulated as:
\begin{align}
R_1 &\leq I(X_{1,Q}; Y_Q \mid X_{2,Q}, Q) + \epsilon_n \\
R_2 &\leq I(X_{2,Q}; Y_Q \mid X_{1,Q}, Q) + \epsilon_n \\
R_1 + R_2 &\leq I(X_{1,Q}, X_{2,Q}; Y_Q \mid Q) + \epsilon_n,
\end{align}
where $X_{1,Q}$, $X_{2,Q}$, and $Y_Q$ are random variables whose joint distribution, conditioned on $Q = \ell$, is identical to the distribution of the $\ell$-th symbols $(X_{1,\ell}^{(n)}, X_{2,\ell}^{(n)}, Y_{\ell}^{(n)})$ for $\ell \in \{1, \dots, n\}$.

The introduction of $Q$ represents a time-sharing strategy over $n$ signaling points. Since the set of all achievable rate pairs is defined as a convex set, any point generated by such a time-sharing procedure conditioned on $Q$ must, by definition, lie within the capacity region. Consequently, the rate pair $(R_1 - \epsilon_n, R_2 - \epsilon_n)$ is contained within the convex hull of the achievable region defined by the mutual information constraints in \eqref{eq:system model formula1}--\eqref{eq:system model formula3}. As the block length $n \to \infty$, the overhead term $\epsilon_n$ vanishes, i.e., $\lim_{n \to \infty} \epsilon_n = 0$. It follows that any achievable rate pair $(R_1, R_2)$ must lie within the closure of $\mathcal{C}_{\textrm{AM-MAC}}$, which establishes the converse.


\subsection{Achievability}

As detailed in Appendix~\ref{proof:opt}, the boundary points of AM-MAC capacity region under heterogeneous input constraints can be characterized by maximizing the weighted sum rate $\mu_1R_1+\mu_2R_2$ over all achievable rate pairs. Here, $(\mu_1,\mu_2)$ are nonnegative weights satisfying $\mu_1+\mu_2 = 1$. The optimization problem $\textbf{P1}$ is formulated as

\begin{align*}
\textbf{P1}:&
\max_{(R_1,R_2)\in\mathcal C_{\textrm{AM-MAC}}} \mu_1 R_1 + \mu_2 R_2  \\
=& \max_{X_1:\mathbb{E}[|X_1|^2]\leq P,X_2:|X_2|\leq1} g(\mu_1, \mu_2; X_1, X_2),
\end{align*}
where the objective function $g(\cdot)$ is defined based on the relative magnitudes of the weights:
\begin{equation*}
g = \left\{
\begin{aligned}
& I(X_1, X_2; Y), && \text{if } \mu_1 = \mu_2, \\
& \mu_1 I(X_1; Y | X_2) + \mu_2 I(X_2; Y), && \text{if } \mu_1 > \mu_2, \\
& \mu_1 I(X_1; Y) + \mu_2 I(X_2; Y | X_1), && \text{if } \mu_1 < \mu_2,
\end{aligned}
\right.
\end{equation*}
subject to the constraints $\mathbb{E}[|X_1|^2] \leq P$ and $|X_2| \leq 1$.


The optimization in $\textbf{P1}$ is generally non-convex due to the multiplicative coupling between $X_1$ and $X_2$ in the mutual information terms. To circumvent the resulting analytical challenges, we leverage the Lagrangian duality and the Identity Theorem to rigorously characterize the properties of the optimal input distributions.
The resulting capacity region is illustrated in Fig.~\ref{fig:region}. In the following, the boundary of $\mathcal{C}_{\textrm{AM-MAC}}$ is characterized by analyzing the optimal solutions to $\textbf{P1}$ across three distinct weighting scenarios, i.e., $\mu_1 = \mu_2$, $\mu_1 > \mu_2$, and $\mu_1 < \mu_2$.

\begin{figure}[t]
\centering
\includegraphics[width=.65\columnwidth] {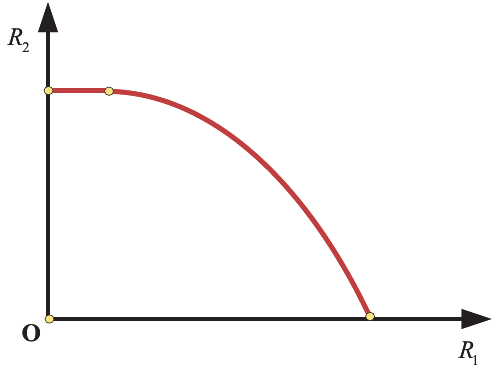}
\caption{Capacity region for model~\eqref{eq:system model} with heterogeneous input constraints.}
\label{fig:region}
\vspace{-1em}
\end{figure}

\subsubsection{Case for $\mu_1=\mu_2$}\label{sec:Rsum}

By defining $X \triangleq (a+X_2)X_1$, the signal model simplifies to $Y = X + Z$. Given the constraints $\mathbb{E}[|X_1|^2] \leq P$ and $|X_2| \leq 1$, the average power of the composite signal $X$ is expressed as
\begin{equation} \label{eq:sum2}
\mathbb{E}[|X|^2] = a^2 \mathbb{E}[|X_1|^2] + \mathbb{E}[|X_1 X_2|^2] + 2a \mathbb{E}[|X_1|^2 \Re(X_2)].
\end{equation}
To maximize the achievable sum rate, it is desirable to maximize the variance of $X$ while ensuring its distribution is as close to Gaussian as possible, since the Gaussian distribution maximizes differential entropy for a fixed power constraint~\cite{cover1999elements}. Consequently, we aim to design input distributions such that $X$ is Gaussian with the maximum possible average power.

Under the peak amplitude constraint $|X_2| \leq 1$, it follows that $\Re(X_2) \leq 1$, which implies $\mathbb{E}[\Re(X_2)] \leq 1$. 
From~\eqref{eq:sum2}, the average power of the composite signal $X$ is maximized when $\mathbb{E}[\Re(X_2)] = 1$ and the PT operates at its full power budget, i.e., $\mathbb{E}[|X_1|^2] = P$. 
Notably, the equality $\mathbb{E}[\Re(X_2)] = 1$ holds if and only if $X_2 = 1$ almost surely, suggesting that any randomness or modulation in $X_2$ would strictly diminish the effective received power.
Consequently, the sum rate is maximized when the BD adopts a purely assistive mode by staying in a fixed reflective state and $X_1 \sim \mathcal{CN}(0, P)$. The maximum sum rate $C_{\text{sum}}$ is thus given by
\begin{equation} \label{eq:sum}
C_{\text{sum}} = \log\left(1 + \frac{P(1+a)^2}{\sigma^2}\right).
\end{equation}

\subsubsection{Case for $\mu_1 >\mu_2$}\label{sec:R1}

We first consider an extreme case where $\mu_1 = 1$ and $\mu_2 = 0$, which corresponds to the maximization of the PT's transmission rate $R_1$.
Let $f_1(X_1)$ and $f_2(X_2)$ denote the PDFs of $X_1$ and $X_2$, respectively. The conditional mutual information $I(X_1; Y | X_2)$ can be expressed as
\begin{align}\label{eq:Jensen_PT}
I(X_1;Y|X_2) &=\int_{\mathbb C}f_2(x_2)I(X_1;Y|X_2 = x_2)dx_2\nonumber\\
&{\leq} \mathop {\max }\limits_{x_2: |x_2| \leq 1} I(X_1;Y|X_2 = x_2),
\end{align}
where the inequality arises because the statistical expectation of a function is upper-bounded by its supremum. 
Equality in~\eqref{eq:Jensen_PT} is attained when $X_2 = 1$ almost surely, which maximizes the effective channel gain.
Under the condition $X_2 = 1$, the composite channel collapses into a point-to-point AWGN channel modeled as $Y = (a+1)X_1 + Z$. 
Given that Gaussian inputs are capacity-achieving for AWGN channels under an average power constraint~\cite{cover1999elements}, the PT reaches its maximum rate $C_1$ when $X_1 \sim \mathcal{CN}(0, P)$, yielding
\begin{equation}
 \label{eq:pri2}
C_1 = \log\left(1+\frac{P(1+a)^2}{\sigma^2}\right).
\end{equation}

For the case where $\mu_1 > \mu_2$, we can reformulate the objective function by substituting $\mu_1 = 1 - \mu_2$, which yields
\begin{align}
& \mu_1 I(X_1; Y | X_2) + \mu_2 I(X_2; Y) \nonumber \\
=& (1 - \mu_2) I(X_1; Y | X_2) + \mu_2 I(X_2; Y) \nonumber \\
=& \mu_2 \left[ I(X_1; Y | X_2) + I(X_2; Y) \right] + (1 - 2\mu_2) I(X_1; Y | X_2) \nonumber \\
=& \mu_2 I(X_1, X_2; Y) + (1 - 2\mu_2) I(X_1; Y | X_2), \label{eq:mu1_greater}
\end{align}
where the last equality follows from the chain rule for mutual information. Since $\mu_1 + \mu_2 = 1$ and $\mu_1 > \mu_2$, it follows that $\mu_2 < \frac{1}{2}$, ensuring that the coefficient $(1 - 2\mu_2)$ is strictly positive.

To maximize the expression in~\eqref{eq:mu1_greater}, we observe that both $I(X_1, X_2; Y)$ and $I(X_1; Y | X_2)$ are simultaneously maximized when $X_2 = 1$ almost surely and $X_1 \sim \mathcal{CN}(0, P)$. In this specific regime, $X_2$ becomes deterministic, causing $I(X_2; Y)$ to vanish and thus $I(X_1, X_2; Y) = I(X_1; Y)$. Consequently, for $\mu_1 > \mu_2$, the optimal signaling strategy collapses to the sum rate optimal case, where the BD acts solely as a passive reflector to facilitate the PT's transmission.

\subsubsection{Case for $\mu_1 < \mu_2$}\label{sec:R2}

\begin{figure}[t]
\centering
\includegraphics[width=.75\columnwidth] {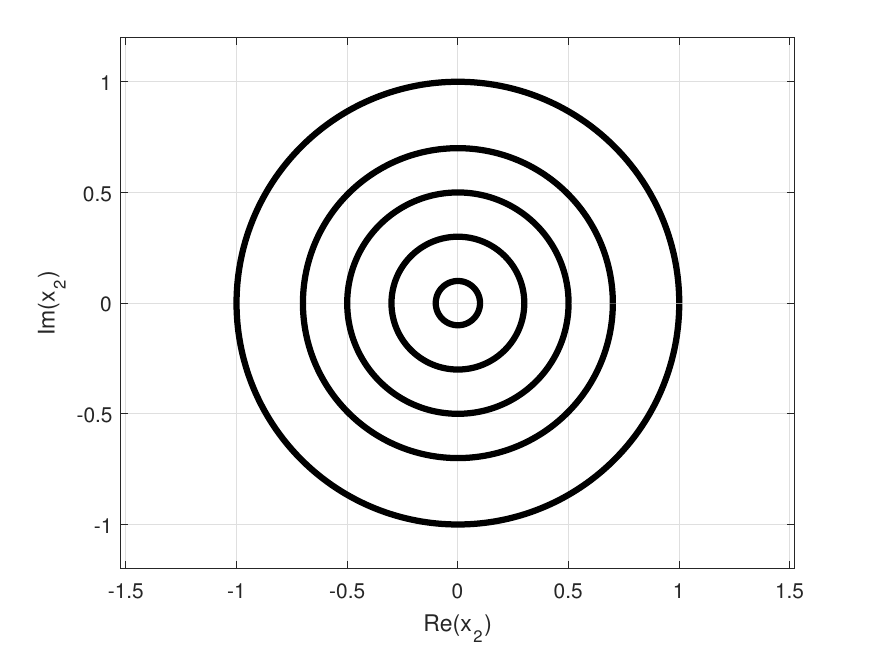}
\caption{An example of the optimal input distribution with peak amplitude constraint.}
\label{fig:distribution}
\vspace{-1em}
\end{figure}

We first consider the special case where $\mu_1 = 0$ and $\mu_2 = 1$, which characterizes the maximum achievable rate of the BD.
\begin{theorem}\label{theorem:BD}
When $\mu_1 = 0$ and $\mu_2 = 1$, the capacity-achieving distributions satisfy the following conditions: (1) The PT transmits constant-envelope signal with $|X_1|^2 = P$ almost surely; and (2) the BD transmits a signal characterized by a discrete amplitude and an independent, uniformly distributed phase.
\end{theorem}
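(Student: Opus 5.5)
The plan is to decouple the two users by conditioning on $X_1$. This reduces the BD's rate to an average of point-to-point peak-limited AWGN mutual informations, which can then be bounded with Jensen's inequality. Since $\mu_1=0$ and $\mu_2=1$ correspond to the case $\mu_1<\mu_2$ of $\textbf{P1}$, the objective is $g=I(X_2;Y|X_1)$. Write $f_1$ for the law of $X_1$, as in~\eqref{eq:Jensen_PT}. Because $X_1$ and $X_2$ are independent, conditioning on $X_1=x_1$ leaves $X_2$ with its original law. Subtracting the known term $ax_1$ and applying the rotation $e^{-j\angle x_1}$ are invertible maps that preserve the law of $Z$. Hence
\begin{equation*}
I(X_2;Y|X_1)=\int_{\mathbb C} f_1(x_1)\,\phi_{X_2}(|x_1|^2)\,dx_1,\qquad \phi_{X_2}(s)\triangleq I\big(X_2;\sqrt{s}\,X_2+Z\big).
\end{equation*}
Thus only the amplitude of $X_1$ matters, the phase of $X_1$ is irrelevant, and the same $X_2$ law is used at every value of $s$.

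The key step is to show that, for any fixed $X_2$ with $|X_2|\le 1$, the function $s\mapsto\phi_{X_2}(s)$ is nondecreasing and concave on $[0,\infty)$. For this I would invoke the I-MMSE relation for complex Gaussian channels:
\begin{equation*}
\frac{d}{ds}\phi_{X_2}(s)=\frac{1}{\sigma^2}\,\mathrm{mmse}(s),\qquad \mathrm{mmse}(s)\triangleq\mathbb{E}\big[|X_2-\mathbb{E}[X_2|\sqrt{s}X_2+Z]|^2\big].
\end{equation*}
The MMSE is nonnegative, which gives monotonicity. It is also nonincreasing in $s$, since the output at a higher SNR is degraded to the output at a lower SNR by adding independent noise; this gives concavity. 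Boundedness of $X_2$ guarantees all moments are finite, so the relation applies. Jensen's inequality and monotonicity then give
\begin{equation*}
I(X_2;Y|X_1)\le\phi_{X_2}\big(\mathbb{E}|X_1|^2\big)\le\phi_{X_2}(P).
\end{equation*}
Both inequalities hold with equality when $|X_1|^2=P$ almost surely. For necessity of condition (1), any non-degenerate $X_2$ has an MMSE that is strictly decreasing, so $\phi_{X_2}$ is strictly concave. Jensen is then tight only if $|X_1|^2$ is almost surely constant, and strict monotonicity forces that constant to equal $P$. This proves (1).

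With $|X_1|^2=P$ fixed, it remains to maximize $\phi_{X_2}(P)$ over all $X_2$ with $|X_2|\le1$. This is exactly the capacity problem of the complex AWGN channel $\sqrt{P}X_2+Z$ under a peak amplitude constraint. By Shamai and Bar-David~\cite{shamai1995capacity}, the unique optimizer has a uniform phase independent of a discrete amplitude with finitely many mass points, i.e., concentric circles. This gives (2), and the maximal BD rate equals that peak-limited capacity at SNR $P/\sigma^2$.

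I expect the main obstacle to be making the concavity and strictness step rigorous rather than merely plausible. This requires verifying the complex-valued form and regularity conditions of the I-MMSE identity for bounded but otherwise arbitrary inputs, and proving that the MMSE is strictly decreasing whenever $X_2$ is non-degenerate, so that constant envelope is necessary and not just sufficient. If the derivative route proves delicate, I would instead argue concavity directly from the degradedness and noise-splitting argument in the point-to-point literature. I would also note that the degenerate case of deterministic $X_2$ gives zero rate and is excluded at the optimum.
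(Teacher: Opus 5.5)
Your proposal is correct and follows essentially the paper's route. You condition on $X_1$ to reduce to a peak-limited complex AWGN channel depending only on $|x_1|^2$, use the I-MMSE relation to get concavity in $|x_1|^2$, apply Jensen to force a constant envelope, and invoke Shamai and Bar-David for the BD's concentric-circle law. Your version is slightly tighter than the paper's in three ways: you fix the law of $X_2$ before averaging over $X_1$, as independence requires; you make explicit the monotonicity step $\phi_{X_2}(\mathbb{E}|X_1|^2)\le\phi_{X_2}(P)$, which the paper leaves implicit; and you justify strictness through a strictly decreasing MMSE, whereas the paper asserts strict concavity while its appendix only establishes a non-positive second derivative.
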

\begin{IEEEproof}
Under the weighting $\mu_1 = 0$ and $\mu_2 = 1$, the objective reduces to maximizing the achievable rate of the BD, $I(X_2; Y | X_1)$.
According to~\cite{shamai1995capacity}, for a given realization $X_1 = x_1$, the distribution of $X_2$ that maximizes the conditional mutual information $I(X_2;Y|X_1 = x_1)$ under the peak amplitude constraint $|X_2|\leq 1$ is discrete in amplitude with a uniform independent phase. Invoking Jensen's inequality yields
\begin{align}
I(X_2;Y|X_1) & = \mathbb E_{X_1}I(X_2;Y|X_1 = x_1)\nonumber\\
&\overset{(a)}{\leq} I(X_2;Y||X_1|^2 = P),
\end{align}
where step $(a)$ is justified by the following three observations.
Firstly, the phase of $X_1$ does not affect the mutual information $I(X_2; Y | X_1 = x_1)$ due to the circular symmetry of the AWGN.
Secondly, as proved in Appendix~\ref{proof:x1concave}, $I(X_2; Y | |X_1|^2 = p)$ is a strictly concave function of the power $p$.
Thirdly, by Jensen's inequality, we have $\mathbb{E}[I(|X_1|^2)] \leq I(\mathbb{E}[|X_1|^2])$, where the equality in the inequality holds when $X_1$ has a constant envelope, i.e., $|X_1|^2 = P$ with probability one.
This concludes the proof.
\end{IEEEproof} 

Next, we investigate the regime where $0 < \mu_1 < \mu_2$. 
In this scenario, the objective function of problem $\textbf{P1}$ can be reformulated as
\begin{align} \label{eq:sympli}
& \mu_1 I(X_1; Y) + \mu_2 I(X_2; Y | X_1) \nonumber \\
=& \mu_1 [h(Y) - h(Y | X_1)] + \mu_2 [h(Y | X_1) - h(Y | X_1, X_2)] \nonumber \\
=& \mu_1 h(Y) + (\mu_2 - \mu_1) h(Y | X_1) - \mu_2 \log(\pi e \sigma^2),
\end{align}
where the second equality utilizes the fact that $h(Y | X_1, X_2) = h(Z) = \log(\pi e \sigma^2)$ due to the additive nature of the noise. 
By analyzing the impact of the PT's phase on the differential entropies $h(Y)$ and $h(Y | X_1)$, we establish the following property.
\begin{lemma}\label{lemma:x1phase}
For the optimization problem $\textbf{P1}$ with $\mu_2>\mu_1$, the optimal input distribution of $X_1$ is characterized by its phase $\theta_1 \triangleq \angle X_1$ that is uniformly distributed over $[-\pi, \pi)$ and statistically independent of its amplitude $r \triangleq |X_1|$.
\end{lemma}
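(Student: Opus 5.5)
The approach is a symmetrization argument on the objective in~\eqref{eq:sympli}, namely $\mu_1 h(Y) + (\mu_2-\mu_1) h(Y|X_1) - \mu_2\log(\pi e\sigma^2)$ with $\mu_2-\mu_1>0$. Fix any feasible pair $(X_1,X_2)$ and write $X_1 = r e^{j\theta_1}$. Construct $\tilde X_1 = X_1 e^{j\Phi}$, where $\Phi\sim\mathcal U(-\pi,\pi)$ is independent of $(X_1,X_2,Z)$. Then $\tilde X_1$ has the same amplitude distribution as $X_1$, so $\mathbb E|\tilde X_1|^2 = \mathbb E|X_1|^2\le P$ and feasibility is preserved. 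Its phase is uniform on $[-\pi,\pi)$ and independent of $r$. The aim is to show the objective does not decrease under this replacement, and that it strictly increases unless $X_1$ already has this form.

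\emph{The conditional term.} For a fixed realization $X_1=x_1$, the output is $Y=(a+X_2)x_1+Z$. Rotating $x_1$ by $e^{j\phi}$ multiplies the noiseless part by $e^{j\phi}$. Since $Z$ is circularly symmetric, this is a unitary rotation of $Y$, and differential entropy is invariant under unitary maps, so $h(Y|X_1=x_1e^{j\phi}) = h(Y|X_1=x_1)$. Averaging gives $h(Y|\tilde X_1) = h(Y|X_1)$: the second term depends on $X_1$ only through the law of $r$.

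\emph{The unconditional term.} Let $\tilde Y = (a+X_2)\tilde X_1 + Z$. Because $e^{j\Phi}Z \overset{d}{=} Z$ and $Z$ is independent of everything else, $\tilde Y \overset{d}{=} e^{j\Phi} Y$. Conditioning reduces entropy, so
\[
h(e^{j\Phi}Y) \ge h(e^{j\Phi}Y\,|\,\Phi) = h(Y).
\]
The equality uses that $Y$ is independent of $\Phi$ and that rotation preserves entropy. Hence $h(\tilde Y)\ge h(Y)$. Since $\mu_1>0$, the objective weakly increases, and any maximizer may be replaced by one whose phase is uniform and independent of $r$. Equivalently, concavity of $h$ in the output density, together with the fact that $f_{\tilde Y}$ is the phase-average of rotated copies of $f_Y$, gives the same conclusion.

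\emph{Main obstacle: strictness.} The lemma claims the optimal $X_1$ has this form, not just that one optimal $X_1$ does. Equality in $h(e^{j\Phi}Y)\ge h(Y|\Phi)$ holds iff $I(\Phi;e^{j\Phi}Y)=0$, that is, iff $f_Y$ is rotation invariant, i.e., $Y$ is circularly symmetric. The step I expect to be hardest is showing that circular symmetry of $Y$ forces the phase of $X_1$ to be uniform and independent of $r$. My first attempt would use characteristic functions. The law of $Y$ is that of $W=(a+X_2)X_1$ convolved with a nowhere-vanishing Gaussian, so $Y$ is circularly symmetric iff $W$ is. Then, using independence of $X_1$ and $X_2$ and the fact that $a\neq 0$ keeps $a+X_2$ away from zero with positive probability (and is exactly $a$ when $X_2$ degenerates at 0), I would deduce circular symmetry of $X_1$. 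This deduction might need genericity, or an Identity-Theorem argument on the analytic characteristic function. If strictness fails in degenerate cases (for instance, when $X_2$ is itself uniform in phase, so $W$ is symmetric regardless of $X_1$), I would state the lemma in the weaker but sufficient form: without loss of optimality, $\theta_1\sim\mathcal U(-\pi,\pi)$ independent of $r$.
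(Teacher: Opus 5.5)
Your argument is correct and is essentially the paper's proof. The paper also rotates $X_1$ by an independent uniform phase and obtains $h(\tilde Y)\ge h(Y)$ from concavity of differential entropy on the phase-averaged density $f_{\tilde Y}$, which is equivalent to your conditioning step. It also notes that $h(Y|X_1)$ depends on $X_1$ only through $r$. So it, too, establishes only the ``without loss of optimality'' form you fall back to, and it never addresses strictness.

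One correction to your aside on degenerate cases. For $a\neq 0$, a uniform-phase $X_2$ does not make $W=(a+X_2)X_1$ circularly symmetric regardless of $X_1$. For example, $X_1\equiv\sqrt{P}$ gives $\mathbb{E}[W]=a\sqrt{P}\neq 0$. A genuine degenerate case is $X_2=-a$ a.s.\ when $|a|\le 1$, which gives $Y=Z$ for every $X_1$.
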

\begin{IEEEproof}
Please refer to Appendix~\ref{proof:x1phase}.
\end{IEEEproof} 

With Lemma~\ref{lemma:x1phase} established, the remaining task is to determine the optimal amplitude distribution of $X_1$ and the optimal distribution of $X_2$. In the following, we leverage the Identity Theorem to prove that these optimal distributions are indeed discrete with finite support.

Let $f_r(r) \in \mathcal{F}_r$ denote PDF of the amplitude of $X_1$. By invoking the phase symmetry established in Lemma~\ref{lemma:x1phase}, problem $\textbf{P1}$ for the regime $\mu_2 > \mu_1$ is simplified to
\begin{align*}
\textbf{P2}: \max_{f_r \in \mathcal{F}r} \quad & \mu_1 h(Y) + (\mu_2 - \mu_1) h(Y | X_1) \\
\text{s.t.} \qquad & \int_{0}^{\infty} r^2 f_r(r) dr \leq P.
\end{align*}

As demonstrated in Appendix~\ref{proof:concave}, the objective function in $\textbf{P2}$ is a strictly concave functional of the input PDF $f_r(r)$. 
Since both the power constraint and the requirement $\int_{0}^{\infty} f_r(r) dr = 1$ are linear functionals of $f_r(r)$, the feasible set is a convex set of measures. Consequently, $\textbf{P2}$ is a convex optimization problem in the space of distributions. 
By leveraging Lagrangian duality theory, the problem can be reformulated into the following form:
\begin{align} \label{eq:ne12}
\min_{\lambda \geq 0} \max_{f_r \in \mathcal{F}_r} \quad & \mathcal{J}(f_r, \lambda),
\end{align}
where $\lambda$ is the Lagrangian multiplier associated with the PT's average power constraint.
The Lagrangian functional $\mathcal{J}(f_R, \lambda)$ is defined over the convex set $\mathcal{F}_r$ and can be expanded as
\begin{align} \label{eq:lagrangian_expanded}
\mathcal{J}(f_r, \lambda) &\triangleq \mu_1 h(Y) + (\mu_2 - \mu_1) h(Y | r) - \lambda \int_{0}^{\infty} r^2 f_r(r) dr \nonumber \\
&= \mu_1 \int_{0}^{\infty} f_r(r) \omega_1(r; f_r) dr + (\mu_2 - \mu_1)\times\nonumber \\
& \quad\int_{0}^{\infty} f_r(r) h(Y |r) dr - \lambda \int_{0}^{\infty} r^2 f_r(r) dr,
\end{align}
where $h(Y | r)$ represents the conditional differential entropy of $Y$ for a fixed PT amplitude $r$. The term $\omega_1(r; f_r)$ captures the contribution of the amplitude distribution $f_r$ to the marginal entropy $h(Y)$, formally defined as
\begin{equation} \label{eq:59}
\omega_1(r; f_r) \triangleq -\int_{\mathbb{C}} K(y, r) \log f_Y(y) dy.
\end{equation}
Here, $K(y, r)$ serves as the effective transition kernel from the PT amplitude $r$ to the received signal $y$, obtained by marginalizing over the uniform phase $\theta_1$ and the BD distribution $f_{X_2}(x_2)$, which is given by
\begin{align} \label{eq:kernel_def}
K(y, r) \triangleq & \frac{1}{2\pi^2\sigma^2} \int_{\mathbb{C}} \int_{-\pi}^{\pi} f_{X_2}(x_2)\times \nonumber\\
&\quad\exp\left( -\frac{|y - (a+x_2)re^{j\theta_1}|^2}{\sigma^2} \right) d\theta_1 dx_2.
\end{align}

To characterize the optimal distribution $f_r^*$, we evaluate the Gateaux derivative (weak derivative) of the functional $\mathcal{J}(f_r, \lambda)$ at $f_r^*$ in the direction of an arbitrary PDF $f_r \in \mathcal{F}_r$, which is defined as
\begin{equation*}
\mathcal{J}_{f_r^{*}}^{'}(f_r) \triangleq \lim_{\epsilon \to 0} \frac{\mathcal{J}((1-\epsilon)f_r^* + \epsilon f_r) - \mathcal{J}(f_r^*)}{\epsilon},
\end{equation*}
where $\epsilon \in [0, 1]$. 
By defining the perturbed distribution $f_r^\epsilon \triangleq (1-\epsilon)f_r^* + \epsilon f_r$, the difference in the functional values can be expanded, which is given by 
\begin{align*}
& J(f_r^{\epsilon})-J(f_r^{*})= \mu_1\int_0^{+\infty} f_r^*(\omega_1(r;f_r^{\epsilon})-\omega_1(r;f_r^*))dr\\
 &+\int_0^{+\infty}\!\!\epsilon(f_r-f_r^{*})(\mu_1\omega_1(r;f_r^{\epsilon})+(\mu_2-\mu_1)h(Y|r)-\lambda r^2)dr
\end{align*}
Accordingly, the weak derivatives is given by
\begin{align*}
J_{f_r^{*}}^{'}(f_r)\!\! =\!\!\! \int_{0}^{\infty}\!\!\!\! (f_r\!-\!f_r^{*})(\mu_1\omega_1(r;f_r^{*})\!+\!(\mu_2\!-\!\mu_1)h(Y|r)\!-\!\lambda r^2)dr.
\end{align*}
Following the approach in~\cite{luenberger1997optimization}, a necessary and sufficient condition for $f_r^*$ to be the maximizer of the concave functional $\mathcal{J}$ is that the Gateaux derivative is non-positive for all $f_r \in \mathcal{F}_r$, i.e., $\mathcal{J}_{f_r^{*}}^{'}(f_r)\leq 0$.
This leads to the following variational inequality
\begin{equation} \label{eq:cond11}
\int_{0}^{\infty} f_r(r) \left[ \omega_1(r; f_r^*) + \left( \frac{\mu_2}{\mu_1} - 1 \right) h(Y |r) - \frac{\lambda}{\mu_1} r^2 \right] dr \leq J_1,
\end{equation}
where $J_1 \triangleq \int_0^\infty f_r^*(r) \left[ \omega_1(r; f_r^*) + \left( \frac{\mu_2}{\mu_1} - 1 \right) h(Y | r) \right] dr - \frac{\lambda}{\mu_1} P$ is a constant independent of $f_r$. Furthermore, the complementary slackness condition ensures that $\lambda (\int_0^\infty r^2 f_r^*(r) dr - P) = 0$. Consequently, the inequality in \eqref{eq:cond11} remains valid regardless of whether the power constraint is active.

By analyzing the variational inequality~\eqref{eq:cond11}, we obtain the following necessary and sufficient conditions for optimality.
\begin{theorem}\label{theorem:feature}
The amplitude distribution $f_r^*$ is optimal if and only if there exists a constant $J_1$ and a Lagrangian multiplier $\lambda \geq 0$ such that the following conditions are satisfied
\begin{align}
& \mathcal L(r; f_r^*) \leq J_1, \quad \forall r \geq 0, \label{eq:condtion11} \\
& \mathcal L(r; f_r^*) = J_1, \quad \forall r \in \mathcal{S}_{f_r^*}, \label{eq:condtion12}
\end{align}
where $\mathcal L(r; f_r^*) \triangleq \omega_1(r;f_r^{*}) + \left( \frac{\mu_2}{\mu_1} - 1 \right) h(Y | r) - \frac{\lambda}{\mu_1} r^2$ is the Lagrangian marginal functional, and $\mathcal S_{f_r^*}$ denotes the support of $f_r^{*}$.
\end{theorem}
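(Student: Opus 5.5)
We derive both conditions from the variational inequality \eqref{eq:cond11}, which already gives a necessary and sufficient characterization of optimality for the concave program \textbf{P2}. The argument is the standard Smith-type passage from an integral inequality over all admissible densities to a pointwise statement.

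\emph{Sufficiency.} Suppose \eqref{eq:condtion11} and \eqref{eq:condtion12} hold for some $J_1$ and $\lambda\ge 0$. Take any $f_r\in\mathcal F_r$, multiply \eqref{eq:condtion11} by $f_r(r)\ge 0$ and integrate; this gives $\int f_r\mathcal L(r;f_r^*)\,dr\le J_1$. Next integrate \eqref{eq:condtion12} against $f_r^*$, which is concentrated on $\mathcal S_{f_r^*}$; this gives $\int f_r^*\mathcal L(r;f_r^*)\,dr=J_1$. Subtracting and multiplying by $\mu_1>0$ yields $\mathcal J'_{f_r^*}(f_r)\le 0$ for every direction $f_r$. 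By strict concavity of $\mathcal J$ (Appendix~\ref{proof:concave}) and the Luenberger criterion, $f_r^*$ maximizes $\mathcal J(\cdot,\lambda)$. If we choose $\lambda$ so that complementary slackness holds, the Lagrangian duality in \eqref{eq:ne12} then makes $f_r^*$ optimal for \textbf{P2}.

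\emph{Necessity.} Assume $f_r^*$ is optimal, so that \eqref{eq:cond11} holds for all $f_r$. We define $J_1$ so that $\int f_r^*\mathcal L\,dr=J_1$ (this is where complementary slackness is used, replacing $P$ by $\int r^2f_r^*$).
\begin{enumerate}[(i)]
\item Suppose \eqref{eq:condtion11} fails, i.e. $\mathcal L(r_0;f_r^*)>J_1$ at some $r_0$. Choose $f_r=\delta(r-r_0)$, or, if one insists on densities, a narrow density near $r_0$ together with continuity of $\mathcal L$. This choice contradicts \eqref{eq:cond11}.
\item Suppose \eqref{eq:condtion12} fails. Then, using (i), $\mathcal L(r;f_r^*)<J_1$ on a subset of $\mathcal S_{f_r^*}$ of positive $f_r^*$-measure. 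Integrating against $f_r^*$ would give $\int f_r^*\mathcal L<J_1$, contradicting the definition of $J_1$.
\end{enumerate}
Since $\mathcal L$ is continuous, the equality then extends from almost every point of the support to every point of the (closed) support.

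\emph{Main obstacle.} The step needing the most care is the regularity of $\mathcal L(r;f_r^*)$: continuity and finiteness in $r$, which are needed for the point-mass and narrow-density perturbations and for the a.e.-to-everywhere upgrade. We would argue as follows. The term $h(Y\mid r)$ is continuous in $r$ by dominated convergence, since $f_{Y|r}$ is a Gaussian mixture with smooth dependence on $r$ and the log-density has quadratic growth. For $\omega_1$, the output density $f_Y$ is bounded below by a Gaussian-type function $c\,e^{-\beta|y|^2}$ because $|X_2|\le 1$ and $\mathbb E r^2\le P$, and it is bounded above by $1/(\pi\sigma^2)$. Hence $|\log f_Y(y)|\le A+B|y|^2$. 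The kernel $K(y,r)$ in \eqref{eq:kernel_def} is continuous in $r$ and has uniformly integrable second moments on compact $r$-sets, so dominated convergence gives continuity of $\omega_1(r;f_r^*)$. The same bounds also justify the interchange of limit and integral in the Gateaux derivative computation.
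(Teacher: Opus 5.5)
Your argument is essentially the paper's. For \eqref{eq:condtion11} you substitute a point mass at the offending $\tilde r$ into \eqref{eq:cond11}, and for \eqref{eq:condtion12} you use continuity of $\mathcal L(\cdot;f_r^*)$ and the positive $f_r^*$-mass of every neighborhood of a support point to force $\int f_r^*\mathcal L\,dr<J_1$; your ``a.e.\ equality, then upgrade by continuity'' route is the same idea. You go somewhat beyond the paper by writing out the sufficiency direction and by justifying continuity of $\mathcal L$ through the Gaussian-type lower bound on $f_Y$, which the paper only asserts. The one slip is wording: in the sufficiency direction $\lambda$ is given, not chosen, so complementary slackness $\lambda(\int r^2 f_r^*\,dr-P)=0$ should be derived by integrating \eqref{eq:condtion12} against $f_r^*$ and comparing with the paper's definition of $J_1$, which contains $P$.
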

\begin{IEEEproof}
The proof of this Theorem follows from the properties of Gateaux derivatives for concave functionals on a convex set of PDFs. Please see Appendix~\ref{proof:condition} for detailed derivations.
\end{IEEEproof}

Theorem~\ref{theorem:feature} provides the necessary and sufficient conditions that any optimal amplitude distribution must satisfy via the variational inequality. However, it remains to be determined whether the optimal radial distribution $f_r^*$ is continuous or discrete. By examining the real-analyticity of this functional and its asymptotic characteristics as $r \to \infty$, we can rigorously characterize the geometry of the optimal signaling, as presented in the following theorem.

\begin{theorem}
For the weighting regime $\mu_2 > \mu_1$, the optimal amplitude distribution $f_r^*$ that maximizes the functional $\mathcal{J}(f_r, \lambda)$ is uniquely supported on a finite discrete set of points within the interval $[0, \infty)$.
\end{theorem}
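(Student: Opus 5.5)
Following Smith's method as extended by Shamai and Bar-David, we argue by contradiction through the Identity Theorem. By Theorem~\ref{theorem:feature}, the function
\begin{equation*}
D(r) \triangleq \mathcal L(r; f_r^*) - J_1
\end{equation*}
is nonpositive on $[0,\infty)$ and vanishes on $\mathcal S_{f_r^*}$. Uniqueness of $f_r^*$ itself follows from the strict concavity of the objective of $\textbf{P2}$ (Appendix~\ref{proof:concave}), so it remains to show that $\mathcal S_{f_r^*}$ is finite.

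First we extend $r\mapsto D(r)$ analytically to an open connected region of $\mathbb{C}$ containing $[0,\infty)$. For $\omega_1(r;f_r^*)$, integrate over $\theta_1$ in the kernel~\eqref{eq:kernel_def}. This gives
\begin{equation*}
\int_{-\pi}^{\pi}\exp\!\left(-\frac{|y-(a+x_2)re^{j\theta_1}|^2}{\sigma^2}\right)d\theta_1 = 2\pi\, e^{-(|y|^2+|a+x_2|^2r^2)/\sigma^2}\, I_0\!\left(\frac{2r|a+x_2||y|}{\sigma^2}\right),
\end{equation*}
which is entire in $r$ (even in $r$, since $I_0$ is even). We replace $r$ by $z$ and bound $|K(y,z)|$ on compact strips around the real axis by an integrable envelope of the form $C e^{-|y|^2/\sigma^2 + c|y|}$. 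Since $|a+x_2|\le 1+|a|$, this envelope is uniform over $x_2$. The output density satisfies $f_Y(y)\ge c_0 e^{-c_1|y|^2}$, so $|\log f_Y(y)|$ grows at most quadratically. Dominated convergence together with Morera's theorem then gives analyticity of $\omega_1(z;f_r^*)$. The term $h(Y\mid r)$ is treated the same way, because $f_{Y|r}$ has the same Bessel-kernel form. The term $r^2$ is trivially entire.

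Next, suppose $\mathcal S_{f_r^*}$ is infinite. The bounded case $\mathcal S_{f_r^*}\subset[0,A]$ is the easy one: the support has an accumulation point, so by the Identity Theorem $D\equiv 0$ on the whole region, hence on $[0,\infty)$. The unbounded case, in which the support could be infinite with no finite accumulation point, has to be excluded separately. We do this with a tail argument, showing that $D(r)<0$ for all large $r$ via the asymptotics below, which forces the support to be bounded. In both cases the remaining task is to contradict $\mathcal L(r;f_r^*)=J_1$ holding for large $r$.

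The main obstacle is the asymptotic analysis as $r\to\infty$. For a fixed amplitude $r$ and the BD distribution $f_{X_2}$, the output $Y=(a+X_2)re^{j\theta_1}+Z$ is spread over an annular region of radius on the order of $r$. Hence $h(Y\mid r)\approx \log r^2 + O(1)$ whenever $f_{X_2}$ is nondegenerate, and $h(Y\mid r)\le \log(\pi e(\sigma^2+r^2(1+|a|)^2))$ in all cases. The upper bound $\omega_1(r;f_r^*)\le \log(\pi e\sigma^2)+ (\text{quadratic tail of } -\log f_Y)$ needs care: we lower-bound $f_Y$ using only the mass of $f_r^*$ at a single support point, which gives $-\log f_Y(y)\le c_1|y|^2+c_2$ and hence $\omega_1(r)\le c_1(1+|a|)^2 r^2 + O(1)$.

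The decisive step is to compare this with the $-\frac{\lambda}{\mu_1}r^2$ term. If $\lambda>0$ with a sufficiently sharp constant, then $D(r)\to-\infty$, contradicting $D\equiv 0$. If $\lambda=0$, we would instead show that $\omega_1+(\frac{\mu_2}{\mu_1}-1)h(Y\mid r)$ cannot stay constant. The idea is a Smith-type argument: identity on the whole line forces $f_Y$ to satisfy an integral equation whose only solution is a Gaussian-type output. Such an output is impossible here, because $Y$ is a phase-mixed product with peak-limited $|a+X_2|$ and cannot be circularly Gaussian unless $f_r^*$ is Rayleigh. The Rayleigh case is in turn excluded because it violates the $(\frac{\mu_2}{\mu_1}-1)h(Y\mid r)$ growth, which is logarithmic rather than constant.

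The same two-sided sharpening of $\omega_1$'s growth constant is what we expect to be the most delicate part. We would first attempt it through the Shamai--Bar-David technique of expanding $\log f_Y$ in radial Laguerre/Bessel series.
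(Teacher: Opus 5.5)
There is a genuine gap, and it is the step you defer: showing $D(r)<0$ for all large $r$. Your upper bound on $\omega_1$ is already sharp. Jensen gives $f_Y(y)\ge(\pi\sigma^2)^{-1}e^{-(|y|^2+\mathbb E|X|^2)/\sigma^2}$ with $X=(a+X_2)X_1$, hence $\omega_1(r;f_r^*)\le\frac{\mathbb E|a+X_2|^2}{\sigma^2}r^2+O(1)$. So your argument closes exactly when $\lambda/\mu_1>\mathbb E|a+X_2|^2/\sigma^2$, modulo the analytic continuation discussed below. That is a condition on the unknown multiplier. No sharpening of $\omega_1$'s growth can supply it, because the condition is \emph{necessary} for bounded support and it fails in general.

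To see this, suppose $\mathcal S_{f_r^*}\subset[0,A]$. Then $|X|\le R\triangleq(1+|a|)A$, so $f_Y(y)\le(\pi\sigma^2)^{-1}e^{-(|y|-R)^2/\sigma^2}$ for $|y|\ge R$. This gives $\omega_1(r;f_r^*)\ge\frac{\mathbb E|a+X_2|^2}{\sigma^2}r^2-O(r)$. The conditional-entropy term is bounded below by $(\frac{\mu_2}{\mu_1}-1)\log(\pi e\sigma^2)$, so $\mathcal L(r;f_r^*)\le J_1$ for all $r$ forces $\lambda\ge\mu_1\mathbb E|a+X_2|^2/\sigma^2>0$. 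The power constraint is then active, and comparing $\mathcal J(f_r^*,\lambda)$ with $\mathcal J(\delta_0,\lambda)$ gives $\lambda P\le\mu_1I(X_1;Y)+\mu_2I(X_2;Y|X_1)$.

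Now take $|a|\ge2$. Then $u\triangleq\mathbb E|a+X_2|^2P/\sigma^2\ge P/\sigma^2$, and both mutual informations are at most $\ln(1+u)$. Hence $\mu_1u\le\ln(1+u)$, i.e.\ $\mu_1\le\ln(1+P/\sigma^2)/(P/\sigma^2)$. In the paper's own setting $a=2$, $P/\sigma^2=10$ dB, every $\mu_1\in(0.24,0.5)$ violates this. There the optimal amplitude law has unbounded support, $D$ vanishes at arbitrarily large $r$, and the finiteness claim itself is false. So the planned Laguerre/Bessel refinement cannot rescue the tail step; a correct statement must restrict the parameters.

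The paper's proof does not close this either; it has the same hole in mirror image. It only lower-bounds $A(r)$ through Lemma~\ref{lem:gaussian-lower-envelope}. That bound makes $\psi\to+\infty$ only when $c_1\mathbb E|a+X_2|^2>\lambda/\mu_1$, and in that case it contradicts $\mathcal L\le J_1$ itself, i.e.\ it would prove the support unbounded. Otherwise it yields nothing, so the claims that $\psi$ cannot vanish identically and that $|\psi(r)|\to\infty$ are unsupported.

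Your skeleton is the right one for a Smith-type proof: a single eventual-negativity fact both bounds the support and, via the Identity Theorem on a neighbourhood of $[0,\infty)$ including $r=0$, excludes accumulation points. What is missing is precisely the inequality on $\lambda$, which cannot hold for all $\mu_2>\mu_1$.

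Two smaller points. First, $\lambda=0$ needs no Smith/Rayleigh integral-equation argument. Gibbs' inequality gives $\omega_1(r;f_r^*)\ge h(Y\mid|X_1|=r)\to\infty$ whenever $\Pr(X_2\ne-a)>0$, which contradicts $\mathcal L\le J_1$ directly. Second, continuing $h(Y\mid r)$ analytically is not ``the same'' as for $\omega_1$. The function $\log K(y,z)$ needs a zero-free complex neighbourhood and a domination that are uniform in $y$, and you have not established either.
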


\begin{IEEEproof}
To establish the structural properties of $f_r^*$, we analyze the analytical behavior of the Lagrangian marginal functional $\mathcal{L}(r; f_r^*)$. As detailed in Appendix~\ref{proof:real}, the constituent terms of $\mathcal{L}(r; f_r^*)$ are shown to be real-analytic on the domain $(0, \infty)$.

Furthermore, by examining its asymptotic characteristics as $r \to \infty$, we demonstrate that $\mathcal{L}(r; f_r^*)$ is dominated by a quadratic term and a logarithmic term, ensuring it cannot be identically zero as $r \to \infty$. By invoking the Identity Theorem for real-analytic functions, we prove that the roots of the equation $\mathcal{L}(r; f_r^*) = J_1$ must be isolated. Detailed derivations are provided in Appendix~\ref{proof:x1}. 
\end{IEEEproof}

\begin{remark}
The discrete nature of the optimal amplitude distribution $f_r^*$ in our AM-MAC model is fundamentally rooted in the channel uncertainty created by the backscatter modulation. In~\cite{abou2001capacity}, it is established that for Rayleigh fading channels without channel state information, the absence of knowledge regarding the fading channel response forces the capacity-achieving input to be discrete.

A rigorous mathematical parallel can be drawn between the non-coherent setting and our $X_1 X_2$ interaction in~\eqref{eq:system model}. Specifically, the BD's input $X_2$ acts as an equivalent unknown channel state from the perspective of decoding $X_1$ at the receiver. 
Just as the stochastic nature of the fading in~\cite{abou2001capacity} and multiplicative phase perturbations in~\cite{1337103,yang2017multiplexing} break the optimality of Gaussian signaling, the multiplicative uncertainty introduced by $X_2$ necessitates the use of discrete probability measures to achieve the capacity boundary of the AM-MAC. This observation aligns our findings with the broader principle that Gaussian inputs are generally suboptimal for channels characterized by multiplicative state uncertainty.   $\hfill\blacksquare$
\end{remark}

Next, we characterize the optimal distribution of $X_2$ for a given $f_{X_1}$. As demonstrated in Appendix~\ref{proof:x2concave}, both $h(Y)$ and $h(Y|X_1)$ are strictly concave functionals of $f_{X_2} \in \mathcal{F}_{X_2}$. 
Since strict concavity is preserved under non-negative linear combinations, the objective functional of problem $\textbf{P1}$ is strictly concave over the space of BD input distributions.

Parallel to the analysis for $f_r(r)$, the distribution $f_{X_2}^*$ is optimal if and only if there exists a constant $T_1$ such that the following KKT conditions are satisfied:
\begin{align}
&\omega_2(X_2;f_{X_2}^{*}) \leq T_1, \quad \forall X_2 \in \mathcal{D}, \label{eq:opt_x2_1}\\
&\omega_2(X_2;f_{X_2}^{*}) = T_1, \quad \forall X_2 \in \mathcal{S}_{f_{X_2}^*}, \label{eq:opt_x2_2}
\end{align}
where $\mathcal{D} \triangleq \{x_2 : |x_2| \leq 1\}$ represents the unit disk constraint for the backscatter signal, and $\mathcal{S}_{f_{X_2}^*}$ denotes the support of $f_{X_2}^*$. The auxiliary functional $\omega_2(x_2; f_{X_2})$ is defined as
\begin{align*}
&\omega_2(x_2; f_{X_2}) \triangleq -\int_{\mathbb{C}} \int_{\mathbb{C}} f_{X_1}(x_1) f_Z(y-(a+x_2)x_1)\times \\
&\quad \left[ \mu_1 \log f_Y(y) + (\mu_2-\mu_1) \log f_{Y|X_1}(y|x_1) \right] dy dx_1,
\end{align*}
where $f_Z(z)$ is the PDF of the noise $Z$.
\begin{theorem}\label{prop:no_2D_support}
For the weighting regime $\mu_2 > \mu_1$, the optimal distribution $f_{X_2}^*$ in the complex plane $\mathbb{C}$ is a discrete random variable uniquely supported on a finite number of mass points within the unit disk $|X_2| \leq 1$.
\end{theorem}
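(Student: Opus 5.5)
The plan is a Smith-type argument by contradiction, based on the KKT conditions \eqref{eq:opt_x2_1}--\eqref{eq:opt_x2_2}. Suppose the support $\mathcal{S}_{f_{X_2}^*}$ contains infinitely many points. Since $\mathcal D$ is compact, Bolzano--Weierstrass gives an accumulation point inside the closed unit disk. The aim is to show that this forces $\omega_2(x_2; f_{X_2}^*) \equiv T_1$ on a set so large that the asymptotic behaviour of $\omega_2$ becomes contradictory.

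The first step is to extend $\omega_2(x_2; f_{X_2}^*)$, which is defined for $x_2 \in \mathcal D \subset \mathbb{R}^2$, to a real-analytic function of $(\Re x_2, \Im x_2)$ on all of $\mathbb{R}^2$. The Gaussian kernel
\begin{equation*}
f_Z\bigl(y-(a+x_2)x_1\bigr) = \frac{1}{\pi\sigma^2}\exp\!\left(-\frac{|y-(a+x_2)x_1|^2}{\sigma^2}\right)
\end{equation*}
is entire in the two real variables $u=\Re x_2$ and $v=\Im x_2$, each separately complexified. Because $f_{X_2}^*$ is supported in the compact set $\mathcal D$ and $f_{X_1}$ has a finite second moment, $\log f_Y(y)$ and $\log f_{Y|X_1}(y|x_1)$ grow at most quadratically in $|y|$. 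Dominated convergence together with Morera's theorem, applied in each complex variable, then shows that $\omega_2$ extends holomorphically to a neighbourhood of $\mathbb{R}^2$ in $\mathbb{C}^2$.

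The main obstacle is the next step, because an accumulation point in two real dimensions does not force an analytic function to be constant. The zero set of a nonconstant real-analytic function of two variables can be a curve. I would handle this in two ways.

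First, I would use the phase symmetry from Lemma~\ref{lemma:x1phase}. Since $\theta_1$ is uniform and independent of $r$, the kernel depends on $x_2$ only through $|a+x_2|$. Hence $\omega_2(x_2) = \tilde\omega(\rho)$ with $\rho = |a+x_2|$, where $\tilde\omega$ is real-analytic (and even) in $\rho$. Restricted to circles centred at $-a$, $\omega_2$ is constant.

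Second, I would reduce to one dimension. The KKT equality holds on the support, and $\omega_2 \le T_1$ on $\mathcal D$. If the set $\{\rho = |a+x_2| : x_2 \in \mathcal S_{f_{X_2}^*}\}$ were infinite, it would have an accumulation point in the compact interval $[|a|-1, |a|+1]$. The one-variable Identity Theorem would then give $\tilde\omega(\rho) \equiv T_1$ for all $\rho \ge 0$.

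To contradict this, I would study $\rho \to \infty$, treating $\rho$ as the effective gain of $X_1$, in the same way as the PT amplitude argument of Appendix~\ref{proof:x1}. For large $\rho$, the term $-\mu_1 \int f_Z \log f_Y$ behaves like $\mathbb{E}[\log f_Y]$ evaluated at $y \approx \rho x_1$. Because $f_Y$ has Gaussian-type tails governed by $\max|a+x_2|\le |a|+1$, this yields growth of order $\rho^2 \,\mathbb{E}|X_1|^2/\sigma^2$. The $\log f_{Y|X_1}$ term behaves similarly but with a coefficient determined by the conditional spread. Since $\mu_1>0$, the net leading behaviour is a strictly growing quadratic (plus logarithmic) term, which cannot equal a constant. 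This contradiction shows that only finitely many radii $\rho_k$ occur.

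On each circle $|a+x_2|=\rho_k$ intersected with $\mathcal D$, the radial symmetry makes the objective depend on $X_2$ only through $|a+X_2|$. By strict concavity, established in Appendix~\ref{proof:x2concave}, together with the resulting uniqueness of the optimizer, the mass on each arc can be chosen at a single point. Strict concavity in $f_{X_2}$ actually forces the law of $|a+X_2|$ to be unique, so a discrete optimizer exists, namely one point per radius. This would give finitely many mass points in the unit disk.

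If the circular reduction fails because $f_{X_1}$ is not taken phase-uniform, I would instead fix a line through the accumulation point and apply the one-dimensional Identity Theorem to $\omega_2$ restricted to that line, extended to $\mathbb{R}$, combined with the same growth estimate along that line.
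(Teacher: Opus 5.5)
\textbf{The main gap: the radial reduction is false.} Your argument breaks at the claim that uniform $\theta_1$ makes $\omega_2(x_2)=\tilde\omega(|a+x_2|)$. That holds only for the $\mu_1\log f_Y$ part of $\omega_2$. Uniform $\theta_1$ makes $f_Y$ circularly symmetric, and $f_{Y|X_2}(\cdot\,|x_2)$ depends on $x_2$ only through $|a+x_2|$. The $(\mu_2-\mu_1)\log f_{Y|X_1}(y|x_1)$ part depends jointly on $(x_1,y)$, so it cannot be pushed through that kernel. Substituting $x_1=re^{j\theta_1}$ and $y=e^{j\theta_1}(ar+w)$ shows it equals
\begin{equation*}
-(\mu_2-\mu_1)\,\mathbb{E}_r\!\int_{\mathbb{C}} f_Z(w-rx_2)\log f_{rX_2^*+Z}(w)\,dw, \qquad X_2^*\sim f_{X_2}^*.
\end{equation*}
This term does not involve $a$ at all, because conditioning on $X_1$ strips off the direct link: $h(Y|X_1)=\mathbb{E}_r\,h(rX_2+Z)$. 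It measures where $x_2$ sits relative to the law of $X_2^*$, and rotating $x_2$ is equivalent to rotating $X_2^*$ the other way. So it is not a function of $|a+x_2|$. For a non-circularly-symmetric $f_{X_2}^*$, it is not even a function of $|x_2|$.

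Two sanity checks expose the error:
\begin{itemize}
\item At $\mu_1=0$, your claim would confine the BD support to finitely many arcs centered at $-a$. This is incompatible with the concentric circles about the origin in Theorem~\ref{theorem:BD}.
\item An objective depending on $f_{X_2}$ only through the law of $|a+X_2|$ cannot be strictly concave in $f_{X_2}$. That contradicts the strict concavity you invoke at the end.
\end{itemize}

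\textbf{What fails downstream.} Without the reduction, you are back at the obstacle you correctly flagged. A level set of a nonconstant real-analytic function on $\mathbb{R}^2$ can be a curve, so an accumulation point of the support gives nothing by itself. The line fallback does not repair this. Support points accumulating at $x^\ast$ need not lie on any single line, and the one-variable Identity Theorem needs zeros accumulating on the line itself. Your final step, keeping one mass point per arc, also fails. Moving mass along an arc centered at $-a$ preserves $h(Y)$ but changes $h(Y|X_1)$. Even if it did not, it would only exhibit \emph{some} discrete maximizer, not prove that $f_{X_2}^*$ itself is finitely supported.

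\textbf{How the paper's proof differs.} The paper keeps exactly what you dropped: two pieces of $\omega_2$ with different symmetry centers, $-a$ from $h(Y)$ and $0$ from $h(Y|X_1)$. For $a\neq0$, the gradients of $|x_2+a|^2$ and $|x_2|^2$ are linearly independent off the real axis, and the paper uses this to exclude curves in the support. The real axis is handled by the one-variable Identity Theorem plus divergence at infinity, and compactness of $\mathcal D$ gives finiteness. This two-center interplay is what separates $\mu_2>\mu_1>0$ from the concentric-circle case $\mu_1=0$.

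Be aware, though, that the computation above also undercuts the paper's assertion that the $h(Y|X_1)$ piece depends only on $|x_2|$, unless $f_{X_2}^*$ is circularly symmetric. A rigorous version of either route therefore has to treat that piece as a genuinely two-dimensional function of $x_2$.
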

\begin{IEEEproof}
To characterize the support $\mathcal{S}_{f_{X_2}^*}$, we investigate the analytical properties of the auxiliary functional $\omega_2(X_2)$. As established in Appendix~\ref{proof:x2}, $\omega_2(X_2)$ is a real-analytic function of the complex variable $X_2$ over the unit disk $\mathcal{D}$.
The discreteness of the optimal distribution follows from the fact that the gradients of the constituent geometric components of $\omega_2(X_2)$ are linearly independent over $\mathbb{C}$. Combined with the Identity Theorem for real-analytic functions, this property precludes the existence of any continuous level sets or curves where the optimality condition $\omega_2(X_2) = T_1$ could hold. Consequently, the support $\mathcal{S}_{f_{X_2}^*}$ is restricted to a set of isolated points. Details of this proof are in Appendix~\ref{proof:x2}. 
\end{IEEEproof}

\begin{figure}[t]
\centering
\includegraphics[width=.8\columnwidth] {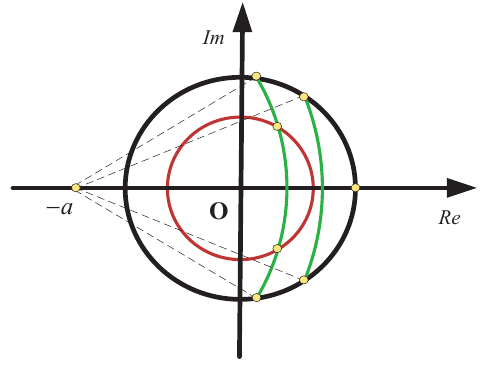}
\caption{Geometric illustration of the level sets of $\omega_2(X_2)$. The intersection of circles centered at the origin and $-a$ characterizes the isolated points of the optimal BD support $\mathcal{S}_{f_{X_2}^*}$.}
\label{fig:design}
\vspace{-2em}
\end{figure}

\begin{remark}
The structural properties of the optimal BD distribution $f_{X_2}^*$ can be intuitively understood from the perspective of geometric symmetries. Specifically, the entropy term $h(Y)$ depends on $X_2$ primarily through the magnitude $|a + X_2|$, which exhibits rotational symmetry along circles centered at $-a$. In contrast, the conditional entropy $h(Y|X_1)$ is governed by $|X_2|$, remaining invariant along circles centered at the origin.
The geometry of the optimal support $\mathcal{S}_{f_{X_2}^*}$ is thus dominated by the interplay between these two distinct families of level sets: circles centered at $-a$ and those centered at the origin. 
Since these two families of circles are non-concentric for $a \neq 0$, this geometric misalignment prevents the optimality condition from being sustained over any continuous area or curve in the complex plane, leading to the discreteness of the optimal input distribution of the BD. $\hfill\blacksquare$
\end{remark}

Fig.~\ref{fig:design} provides a conceptual visualization of the geometric misalignment, which leads to the discreteness of the BD support. 
To clarify that the optimal mass points $\mathcal{S}_{f_{X_2}^*}$ do not merely coincide with the geometric intersections of the two families of circles. Rather, the location of each mass point satisfies that the weighted gradients of the entropy terms $\mu_1 \nabla h(Y)$ and $(\mu_2-\mu_1) \nabla h(Y|X_1)$ cancel each other out perfectly.
As illustrated by the yellow markers in this figure, these points are isolated precisely because the non-concentric nature of the two symmetry centers of $a \neq 0$ prevents any curve that can keep that gradient in balance. 
Thus, the figure illustrates that the support must be discrete, although the exact optimal coordinates are derived through more complex calculations.

%
\section{Numerical Results} \label{sec:results}

In this section, we provide numerical evaluations to characterize the capacity region of the AM-MAC under heterogeneous input constraints. To provide a clear performance benchmark, we first establish a baseline achievable rate pair based on conventional signaling. Subsequently, we detail the numerical optimization framework to determine the Pareto frontier of the capacity region. Finally, we investigate the performance in terms of the achievable rates to key system parameters, including the direct link strength $a$ and the transmit SNR.

\subsection{Baseline Achievable Rate Pair}\label{sec:baseline}

To provide a benchmark for the capacity region, we first establish a baseline achievable rate pair by employing a common signaling scheme. In this baseline, the PT adopts a continuous Gaussian distribution satisfying its average power constraint, i.e., $X_1 \sim \mathcal{CN}(0, {P})$, while the BD employs a circular uniform distribution on the unit circle of the complex plane to satisfy its peak amplitude constraint, i.e., $X_2 = e^{j\theta_2}$ with $\theta_2 \sim \mathcal{U}[0, 2\pi)$. Under this setup, the secondary transmission $X_2$ acts as a pure phase noise from the perspective of the receiver to decode the PT's signal, while $X_1$ provides the carrier for the backscatter link. 

The achievable rate for the PT, denoted by $R_1^{\text{base}}$, is determined by the mutual information $I(X_1; Y)$. By leveraging the circular symmetry of the signal and noise, $R_1^{\text{base}}$ can be expressed as
\begin{align}\label{eq:r1}
&R_1^{\text{base}}=  \mathbb E_{x_1}\left[\int_{0}^{+\infty}\frac{2{\rho}{\kappa}({\rho,x_1})}{\sigma^2} \log\left(\frac{2{\kappa}({\rho,x_1})}{\sigma^2}\right) d{\rho}\right]\nonumber\\
&-\int_{0}^{+\infty}{\rho}\bar{\kappa}(\rho)\log(\bar\kappa(\rho)) d{\rho},
\end{align}
where the conditional density kernel is defined as ${\kappa}(\rho,x_1) = \exp\left(-\frac{{\rho}^2+|x_1|^2}{\sigma^2}\right)I_0\left(\frac{2{\rho}|x_1|}{\sigma^2}\right)$, and $I_0(x) = \frac{1}{2\pi}\int_{-\pi}^{\pi}e^{x\cos\theta}d\theta$, and the marginal density is given by $\bar{\kappa}(\rho) = \int_{-\pi}^{\pi}\frac{1}{\pi(P|a+e^{j\theta_2}|^2+\sigma^2)}\exp\left(-\frac{{\rho}^2}{\sigma^2+P|a+e^{j\theta_2}|^2}\right)d\theta_2$.
By treating $X_1X_2$ as interference, we obtain a lower bound on the achievable rate of the PT, which is given by
\begin{align}\label{eq:lower1}
R_{1,\text{Low}}^{\text{base}} = \log_2\left(1 + \frac{P a^2 }{P+\sigma^2}\right).
\end{align}

The achievable rate of the BD, $R_2^{\text{base}} = I(X_2; Y|X_1)$, corresponds to the capacity of a phase-modulated signal over an AWGN carrier. Given a realization of $X_1$, this is expressed as
\begin{align}\label{eq:r2}
R_2^{\text{base}} = \mathbb E_{x_1}\left[-\int_{0}^{+\infty}\frac{2{\rho}{\kappa}({\rho,x_1})}{\sigma^2} \log\left({\kappa}({\rho,x_1})\right) d{\rho}-\log(e)\right].
\end{align}

To gain further insights into the high and low SNR regimes, we adopt the asymptotic characterization of phase-modulation capacity from~\cite{wyner1966bounds}. The asymptotic behavior of $R_2^{\text{base}}$ can be approximated as follows:
\begin{align}\label{eq:asy1}
R_{2,\text{Asy}}^{\text{base}} \approx \mathbb E_{x_1}\left\{
\begin{array}{cl}
   \frac{1}{2}\log\left(\frac{4\pi |x_1|^2}{e\sigma^2}\right),\;\;&{\frac{|x_1|^2}{\sigma^2}\gg1},    \\
  \frac{|x_1|^2}{\sigma^2},\;\;&{\frac{|x_1|^2}{\sigma^2}\ll1}.
  \end{array}
  \right.
\end{align}\

\begin{figure}[t]
\centering
\includegraphics[width=.88\columnwidth] {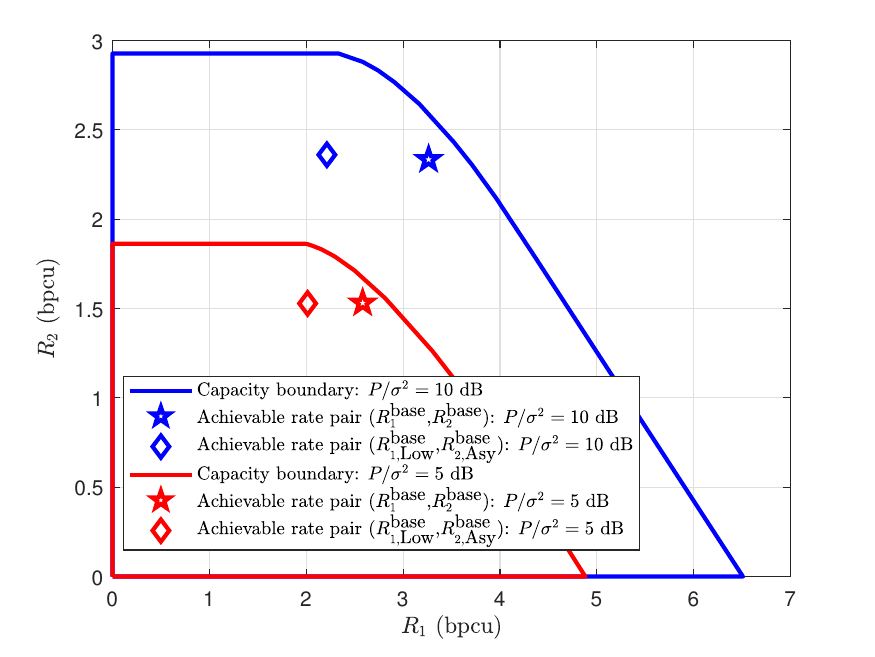}
\caption{Capacity region with $a=2$.}
\label{fig:regionsnr}
\end{figure}

\subsection{Numerical Optimization for the Capacity Region Boundary}

While the baseline rate pair provides an analytical reference, it relies on continuous distributions that, as established in Section~\ref{sec:baseline}, are suboptimal under peak amplitude constraints. 
To determine the exact boundary of the capacity region $\mathcal{C}_{\text{AM-MAC}}$, we employ a numerical optimization framework to identify the optimal discrete mass points and their corresponding probabilities for $r$ and $X_2$.

Specifically, the rate pairs on the capacity boundary can be obtained by solving the following weighted sum-rate maximization problem for various weight vectors $(\mu_1, \mu_2)$:
\begin{align*}
\textbf{P3}:  \;\; \mathop {\max}\limits_{f_r,f_{X_2}}\;\; &\mu_1I(X_1;{Y})+\mu_2I(X_2;{Y}|X_1)\nonumber \\
\textrm{s.t.} \;\;\quad  & \mathbb{E}[r^2]\leq P  ~\textrm{and} ~|X_2|\leq1.
\end{align*}
The optimization variables are the numbers, locations, and the associated probabilities of the mass points of $r$ and $X_2$. Problem $\textbf{P3}$ is a non-convex optimization problem and belongs to nonlinear programming. Nonetheless, it can be solved by using an interior-point algorithm, which introduces barrier functions to deal with inequality constraints and adopts conjugate gradient methods to find a convergence point~\cite{nocedal2014interior}. In the following results, we solve problem $\textbf{P3}$ directly using the fmincon solver in MATLAB, which solves nonlinear programming problems by using interior-point algorithm~\cite{MathWorks}. 

\subsection{Results}

Fig.~\ref{fig:regionsnr} depicts the capacity region of the AM-MAC under heterogeneous constraints for two different SNR levels with $a=2$. As observed in this figure, an increase in the SNR $P/\sigma^2$ leads to a significant expansion of the capacity region in both the $R_1$ and $R_2$ dimensions.
The curved portion of the capacity boundary represents the Pareto optimal trade-off between the two users. The rectangular profile observed at low $R_1$ values indicates that for a certain range, the BD can transmit at its peak rate without significantly degrading the PT's performance.
The capacity boundaries, obtained by solving $\textbf{P3}$ through the fmincon solver, are compared against the baseline signaling pairs in~\eqref{eq:r1} and~\eqref{eq:r2} and their corresponding analytical approximations in~\eqref{eq:lower1} and~\eqref{eq:asy1}.
We can find that the achievable rate pairs $(R_1^{\text{base}}, R_2^{\text{base}})$, represented by star markers, strictly lie within the capacity region. This gap demonstrates the shaping gain achieved by the optimized discrete distributions over conventional continuous distributions.
Several key insights emerge from the comparison between the baseline pairs and their analytical simplifications.
Firstly, the asymptotic rate $R_{2,\text{Asy}}^{\text{base}}$ provides a remarkably accurate approximation of the true baseline rate $R_2^{\text{base}}$ at both SNR levels. 
In contrast, a noticeable gap exists between the true baseline rate $R_1^{\text{base}}$ and its conservative lower bound $R_{1,\text{Low}}^{\text{base}}$. This occurs because the lower bound treats the backscatter signal strictly as additive interference, neglecting the fact that the receiver can partially exploit the structured nature of the multiplicative coupling to improve decoding performance.


\begin{figure}[t]
\centering
\includegraphics[width=.88\columnwidth] {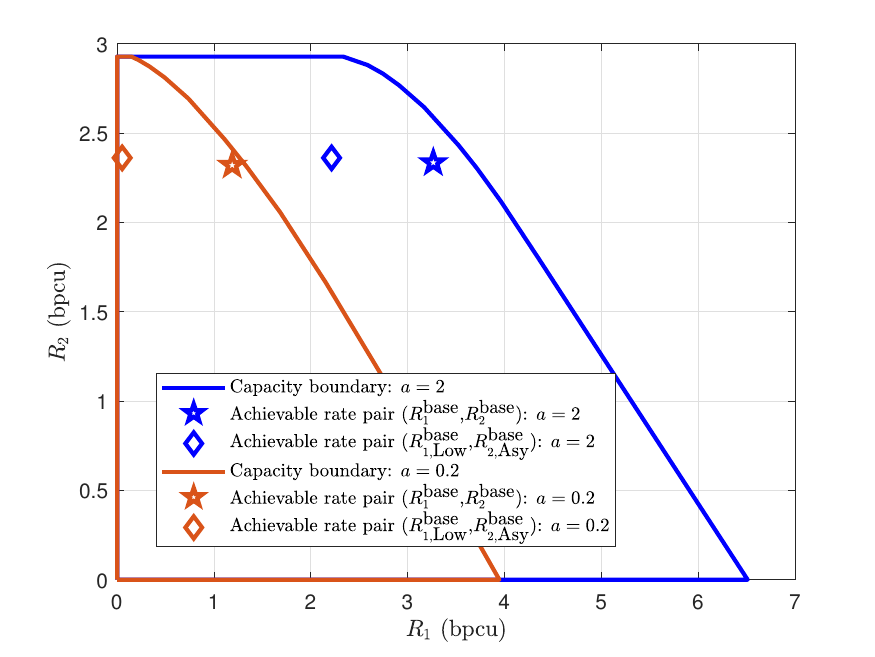}
\caption{Capacity region with $P/\sigma^2 = 10$ dB.}
\label{fig:regiona}
\end{figure}

Fig.~\ref{fig:regiona} illustrates the capacity region for different values of $a$ with the SNR fixed at $P/\sigma^2 = 10$ dB. The parameter $a$ plays a critical role in determining the balance between the additive and multiplicative components of the received signal.
As $a$ increases from $0.2$ to $2$, a substantial expansion is observed along the $R_1$ axis. This phenomenon is expected, as larger $a$ enhances the effective SNR of the direct link, thereby shifting the $R_1$ intercept of the capacity boundary to the right. Notably, when $a=0.2$, the PT's performance highly dependent on the backscatter link $X_2$ to provide additional signaling paths.
Interestingly, the maximum achievable rate for the BD remains relatively invariant to the changes in $a$. This behavior stems from the fact that $R_2$ is primarily governed by the multiplicative term $X_1 X_2$ and the PT's power $P$, rather than the direct link signal. 
With a stronger direct link, the receiver can more effectively distinguish between the additive component $a X_1$ and the multiplicative component $X_1 X_2$. 
This improved allows both users to simultaneously operate near their respective maximum achievable rates, leading to a more rectangular capacity region profile.
Consistent with the previous observations, the baseline achievable rate pairs and their analytical lower bounds remain within the capacity boundary.


\begin{figure}[t]
\centering
\includegraphics[width=.88\columnwidth] {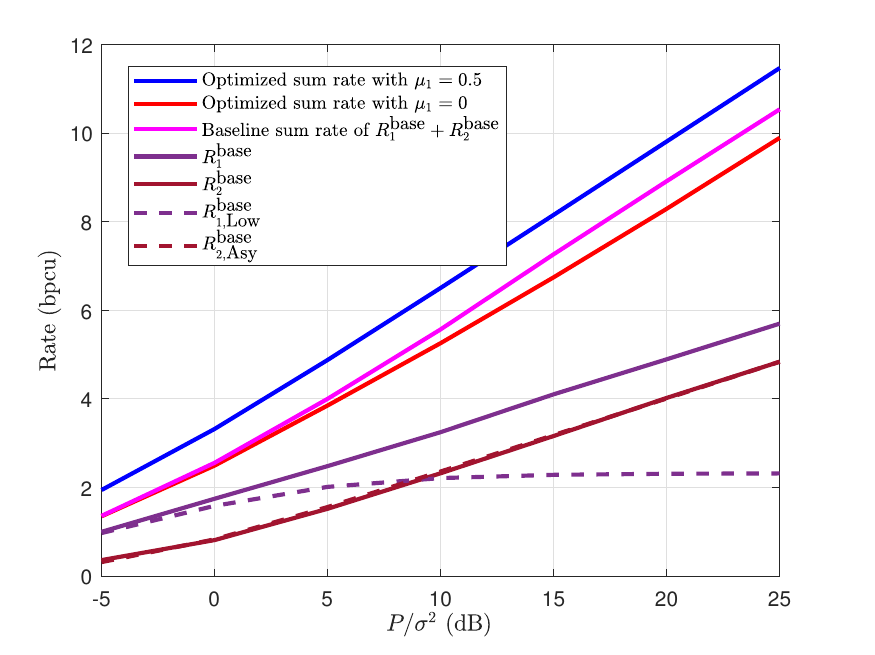}
\caption{Rate versus $P/\sigma^2$ with $a=2$.}
\label{fig:ratesnr}
\vspace{-2em}
\end{figure}

Fig.~\ref{fig:ratesnr} illustrates various rate metrics versus $P/\sigma^2$ with $a=2$. The optimized sum rate with $\mu_1 = 0.5$ consistently outperforms the baseline sum rate, which in turn exceeds the optimized sum-rate for the boundary case where $\mu_1 = 0$.
The secondary rate $R_2^{\text{base}}$ and its asymptotic approximation $R_{2,\text{Asy}}^{\text{base}}$ are nearly indistinguishable across the entire SNR range, validating the tightness of the asymptotic results. 
While the accurate baseline rate $R_1^{\text{base}}$ continues to grow logarithmically with SNR, its lower bound $R_{1,\text{Low}}^{\text{base}}$ exhibits saturation in the high-SNR regime since this bound treats the backscatter term $X_1 X_2$ as interference. 
The fact that the accurate baseline rate $R_1^{\text{base}}$ does not saturate proves that the receiver can effectively exploit the structured multiplicative coupling, ensuring that the PT's rate remains power-limited rather than interference-limited as the transmit SNR increases.



Fig.~\ref{fig:ratea} illustrates the impact of $a$ on the achievable rates and the optimized sum rate of the AM-MAC with $P/\sigma^2 = 10$ dB.
A notable trend is observed in the performance gap between the optimized sum rate and the baseline scheme. 
In the regime where the additive and multiplicative components of the received signal are of comparable magnitude, i.e., $a \approx 1$, this gap reaches its maximum, suggesting that the structural complexity of the coupled interference is most pronounced and that the gains from optimizing discrete mass points are maximal.
Conversely, as $a$ increases beyond $1$, the gap between the optimized maximum sum rate and the baseline sum rate gradually diminishes, with the two curves exhibiting asymptotic convergence. 
This behavior yields a significant practical insight that when the direct-link is sufficiently strong, the conventional baseline signaling scheme becomes near-optimal. 
In such scenarios, the system can bypass the computationally expensive numerical search for optimal discrete mass points. 
Instead, the closed-form approximations and analytical distributions can be used for performance prediction and resource allocation.
Furthermore, both $R_1^{\text{base}}$ and its lower bound $R_{1,\text{Low}}^{\text{base}}$ exhibit logarithmic growth with respect to $a$. This is consistent with the system model where $a$ directly scales the power of the PT's additive component. In contrast, the BD' rate $R_2^{\text{base}}$ remains largely invariant to changes in $a$, reinforcing our earlier observation from Fig.~\ref{fig:regiona} that the BD's performance is primarily governed by the multiplicative coupling rather than the direct-link strength.

\begin{figure}[t]
\centering
\includegraphics[width=.88\columnwidth] {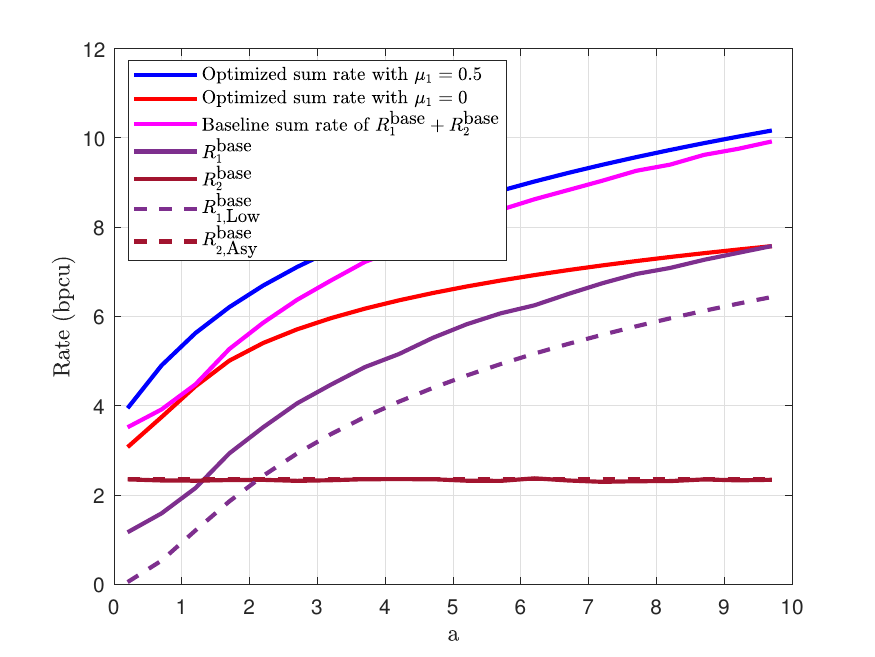}
\caption{Rate versus $a$ with $P/\sigma^2 = 10$ dB.}
\label{fig:ratea}
\vspace{-1em}
\end{figure}

\section {Conclusions}

This paper characterizes the capacity region of a two-user AM-MAC under heterogeneous input constraints, providing an information-theoretic foundation for emerging SR and RIS-enabled information transmission systems. We show that the additive-multiplicative coupling and heterogeneous constraints fundamentally alters the geometry of the capacity region compared to the classical Gaussian MAC.
Our theoretical analysis leads to three key insights. Firstly, the sum-rate capacity is maximized when the BD operates in a purely assistive mode and the PT transmits Gaussian signal. Secondly, to maximize the BD's achievable rate, the PT employs a constant-envelope strategy to provide a stable carrier and the optimal BD signaling follows a concentric-circle structure. Finally, we establish a general discreteness result for the other boundary points where the optimal PT signal consists of a continuous uniform phase and a discrete amplitude, whereas the optimal BD distribution is fully discrete with a finite number of mass points. Numerical results are provided to show the capacity region of AM-MAC under heterogeneous input constraints. 
Future research may generalize this analytical framework to multi-antenna scenarios to facilitate a deeper investigation into the fundamental limits inherent in additive-multiplicative coupling and heterogeneous input constraints.

\appendices

\section{}\label{proof:opt}

The boundary of the capacity region $\mathcal{C}_{\text{AM-MAC}}$ is characterized by solving a weighted sum-rate maximization problem. 
By varying the weight coefficients $\mu_1$ and $\mu_2$, we can trace the entire Pareto frontier of the region. 
In the specific case where $\mu_1 = \mu_2$, the objective function reduces to the total sum-rate $R_1 + R_2$. Under this condition, the maximum sum-rate is determined by the joint mutual information $I(X_1, X_2; Y)$. This is consistent with the capacity results for a standard MAC~\cite{cover1999elements}.

In the regime where the weight $\mu_2$ dominates, i.e., $\mu_1 < \mu_2$, the optimal rate pair $(R_1^*, R_2^*)$ on the Pareto boundary of $\mathcal{C}_{\text{AM-MAC}}$ is defined as the solution to
\begin{equation}
(R_1^*, R_2^*) = \arg \max_{(R_1, R_2) \in \mathcal{C}_{\text{AM-MAC}}} \mu_1 R_1 + \mu_2 R_2.
\end{equation}
By the definition of the capacity region as a closed convex set, there exists a sequence of achievable rate pairs 
${(R_{1,k}, R_{2,k})}_{k=1}^\infty$ converging to $(R_1^*, R_2^*)$. 
Consequently, for any $\varepsilon > 0$, there exists a sufficiently large index $k\varepsilon$ such that the following inequality holds:
\begin{equation} \label{eq:eps_bound}
\mu_1 R_{1,k_\varepsilon} + \mu_2 R_{2,k_\varepsilon} \geq \max_{(R_1, R_2) \in \mathcal{C}_{\text{AM-MAC}}} \mu_1 R_1 + \mu_2 R_2 - \varepsilon.
\end{equation}
Any achievable rate pair $(R_{1,k_\varepsilon}, R_{2,k_\varepsilon})$ within the capacity region can be represented as a convex combination of extreme points, denoted by $(R_{1,k_\varepsilon}, R_{2,k_\varepsilon}) = \sum_{\iota=1}^{\ell} \alpha_\iota (\tilde{R}_{1,\iota}, \tilde{R}_{2,\iota})$, where $\sum_{\iota=1}^{\ell} \alpha_\iota = 1$ and $\alpha_\iota \geq 0$.

Due to the condition $\mu_1 < \mu_2$, the linear objective $\mu_1 R_1 + \mu_2 R_2$ is maximized at the corner point corresponding to the successive interference cancellation (SIC) decoding order $(X_1 \to X_2)$, i.e., $(I(X_1; Y), I(X_2; Y | X_1))$. 
It follows that for each constituent rate pair, we have
\begin{align} \label{eq:sup_bound}
&\mu_1 \tilde{R}_{1,\iota} + \mu_2 \tilde{R}_{2,\iota} \leq \mu_1 I(X_1; Y) + \mu_2 I(X_2; Y | X_1)\nonumber\\
\leq& \max_{X_1:\mathbb{E}[|X_1|^2]\leq P,X_2:|X_2|\leq1}  \mu_1 I(X_1; Y) + \mu_2 I(X_2; Y | X_1).
\end{align}

Combining~\eqref{eq:eps_bound} and~\eqref{eq:sup_bound}, and noting that the inequality holds for an arbitrarily small $\varepsilon > 0$, we arrive at the upper bound:
\begin{align*} 
&\max_{(R_1, R_2) \in \mathcal{C}_{\text{AM-MAC}}} \mu_1 R_1 + \mu_2 R_2 \nonumber\\
\leq &\max_{X_1:\mathbb{E}[|X_1|^2]\leq P,X_2:|X_2|\leq1} \mu_1 I(X_1; Y) + \mu_2 I(X_2; Y | X_1) .
\end{align*}
Conversely, for any fixed input distributions $f_{X_1}$ and $f_{X_2}$, the corner point $(I(X_1; Y), I(X_2; Y | X_1))$ is achievable via SIC, where $X_1$ is decoded first followed by $X_2$. This ensures that the lower bound also holds. Therefore, we conclude that for the regime $\mu_1 < \mu_2$, the weighted sum rate is identically given by:
\begin{align*}
&\max_{(R_1, R_2) \in \mathcal{C}_{\text{AM-MAC}}} \mu_1 R_1 + \mu_2 R_2 \nonumber\\
=& \max_{X_1:\mathbb{E}[|X_1|^2]\leq P,X_2:|X_2|\leq1}  \mu_1 I(X_1; Y) + \mu_2 I(X_2; Y | X_1) .
\end{align*}
The proof for the symmetric case where $\mu_1 > \mu_2$ follows an analogous argument and is thus omitted for brevity. 

\section{}\label{proof:x1concave}

By subtracting the deterministic component $x_1$ from the channel output, the conditional channel from $X_2$ to $Y$ can be reformulated as $\tilde{Y} =  x_1 X_2 + Z$. 
To simplify this into a standard form, we define the SNR as $\gamma \triangleq \frac{|x_1|^2}{\sigma^2}$ and perform phase compensation and noise normalization by setting $\bar{Y} \triangleq \tilde{Y}e^{-j\theta_1}/\sigma$, where $\theta_1$ is the phase of $x_1$. This yields a standard complex Gaussian channel:
\begin{equation}
\bar{Y} = \sqrt{\gamma}X_2 + \bar{Z},
\label{eq:std-gaussian}
\end{equation}
where $\bar{Z} \sim \mathcal{CN}(0, 1)$ represents the normalized AWGN that is independent of $X_2$.

Since the aforementioned transformations are bijective and independent of $X_2$, they are information-preserving. Consequently, the conditional mutual information depends on $x_1$ only through the instantaneous SNR $\gamma$, and can be expressed as
\begin{equation}
I(X_2;Y\mid X_1=x_1)
= I(X_2;\bar{Y})
\triangleq F(\gamma).
\label{eq:I-as-J}
\end{equation}

To characterize the curvature of $F(\gamma)$, we invoke the I-MMSE relationship~\cite{GuoShamaiVerdu2005}, which establishes a fundamental link between information theory and estimation theory. 
Specifically, for the scalar complex Gaussian channel in~\eqref{eq:std-gaussian} with a fixed input distribution of $X_2$, the first derivative of the mutual information (in nats) is given by
\begin{equation}
\frac{\mathrm{d}}{\mathrm{d}\gamma} F(\gamma) = \frac{1}{2} \mathrm{mmse}(\gamma),
\end{equation}
where $\mathrm{mmse}(\gamma) \triangleq \mathbb{E} \left[ |X_2 - \mathbb{E}[X_2| \bar{Y}]|^2 \right]$ denotes the MMSE in estimating $X_2$ from $\bar{Y}$ at SNR $\gamma$.

It has been established in~\cite{GuoShamaiVerdu2005} that $\mathrm{mmse}(\gamma)$ is a differentiable and non-increasing function of $\gamma$. More precisely, its derivative is
\begin{equation}
\frac{\mathrm{d}}{\mathrm{d}\gamma} \mathrm{mmse}(\gamma) = -\mathbb{E} \left[ \left( \mathrm{Var}(X_2 \mid \bar{Y}) \right)^2 \right] \leq 0,
\end{equation}
where $\mathrm{Var}(X_2|\bar{Y})$ represents the conditional variance of $X_2$ given the observation $\bar{Y}$.
By differentiating $F(\gamma)$ twice, we obtain
\begin{align*}
J''(\gamma) &= \frac{1}{2} \frac{\mathrm{d}}{\mathrm{d}\gamma} \mathrm{mmse}(\gamma) = -\frac{1}{2} \mathbb{E} \left[ \left( \mathrm{Var}(X_2 \mid \bar{Y}) \right)^2 \right] \leq 0.
\end{align*}
The non-positive second derivative confirms that $F(\gamma)$ is concave. Since $\gamma$ is linearly proportional to $|x_1|^2$, we conclude that $I(X_2; Y |X_1 = x_1)$ is a concave function of $|x_1|^2$.

\section{}\label{proof:x1phase}
Let $X_1 = r e^{j\theta_1}$, and the received signal can be rewritten as $Y = r(a + X_2)e^{j\theta_1} + Z$.
We first analyze the effects of $\theta_1$ on $h(Y)$.
Defining a new input $\tilde X_1 = r e^{j(\theta_1+\phi)}$ with $\phi \sim \mathcal{U}[-\pi,\pi)$ yields a new phase $\tilde \theta_1 = \theta_1 + \phi$ that is uniform on $[-\pi,\pi)$ and independent of $(r,X_2)$.
Then, the new output satisfies $\tilde Y \triangleq  r(a+X_2) e^{j(\theta_1+\phi)} + Z$.

Since $Z$ is circularly symmetric complex Gaussian, rotation by any phase $\phi$ preserves its distribution, i.e., $Z  \overset{d}{=} e^{j\phi} Z$.
Therefore, the PDF of $\tilde Y$ is an average over all rotations of $Y$, i.e.,
$f_{\tilde Y}(\tilde y) = \int_{-\pi}^{\pi}\frac{1}{2\pi} f_Y(\tilde y e^{-j\phi}) \,d\phi$.
Using the concavity of differential entropy, we have
\begin{align}\label{eq:equality}
    h(\tilde Y) \ge \int_{-\pi}^{\pi} \frac{1}{2\pi}h(Y e^{-j\phi})\,d\phi
           = h(Y),
\end{align}
since entropy is invariant under constant phase rotations.
When $\theta_1$ is uniformly distributed over $[-\pi,\pi)$ and independent of $r$, the distribution of $Y$ is identical to that of $\tilde Y$, and thus equality in~\eqref{eq:equality} holds and the differential entropy $h(Y)$ is maximized.

Next, we will analyze the effects of $\theta_1$ on $h(Y|X_1)$. By letting $\bar Y = e^{-j\theta_1} Y = r(a + X_2) + e^{-j\theta_1}{Z}$, we have $Z  \overset{d}{=} e^{-j\theta_1} Z$ and $h(Y |X_1 = r e^{j\theta_1})= h(\bar Y| X_1 = r e^{j\theta_1})$. Since the conditional distribution of $\bar{Y}$ given $X_1$ is invariant with respect to $\theta_1$, the distribution of $\theta_1$ does not affect $h(Y| X_1)=h(\bar Y| X_1)$.
Combining the above results, we conclude that choosing $\theta_1\sim \mathcal{U}[-\pi,\pi)$ maximizes $h(Y)$ without affecting $h(Y| X_1)$, and hence is optimal.

\section{}\label{proof:concave}

Based on the definition of conditional entropy, $h(Y|X_1 = re^{i\theta})$ is linear with respect to $f_r(r)$. Next, we examine the marginal entropy $h(Y)$. The marginal density $f_Y(y)$ is a linear functional of $f_r(r)$, expressed as
\begin{equation}\label{eq:59}
f_Y(y; f_r) = \int_0^\infty K(y, r) f_r(r) dr.
\end{equation}
For $f_r^\epsilon = \epsilon f_r + (1-\epsilon)f_r^*$ with $\epsilon \in [0,1]$, by linearity of the integral operator, we have
$f_Y(y; f_r^\epsilon) = \epsilon f_Y(y; f_r) + (1-\epsilon) f_Y(y; f_r^*)$. Given that the differential entropy $h(Y)$ is a concave functional of the density $f_Y(y)$~\cite{cover1999elements}, it follows that
\begin{equation*}
h(Y;f_Y(y; f_r^\epsilon))\! \geq \!\epsilon h(Y;f_Y(y; f_r))\! +\! (1-\epsilon) h(Y;f_Y(y; f_r^*)).
\end{equation*}
Consequently, $h(Y)$ is concave in $f_r$, satisfying:
\begin{equation}
h(Y; f_r^\epsilon) \geq \epsilon h(Y; f_r) + (1-\epsilon) h(Y; f_r^*).
\end{equation}

\section{}\label{proof:condition}

We establish the validity of~\eqref{eq:condtion11} in Theorem \ref{theorem:feature} by contradiction. 
Suppose, to the contrary, that \eqref{eq:condtion11} does not hold. Then, there must exist some $\tilde{r}$ such that the following inequality is satisfied:
\begin{equation} \label{eq:contradiction_hypo}
\omega_1(\tilde{r}; f_r^{*}) > J_1 - \left( \frac{\mu_2}{\mu_1} - 1 \right) h(Y | \tilde{r}) + \frac{\lambda}{\mu_1} \tilde{r}^2.
\end{equation}
Recall that \eqref{eq:cond11} is required to hold for all distributions $f_r \in \mathcal{F}_r$. 
By choosing a specific distribution $f_r = \delta(r - \tilde{r})$, where $\delta(\cdot)$ denotes the Dirac delta function, and substituting it into the integral form of~\eqref{eq:cond11}, we obtain
\begin{equation}
\int f_r \left( \omega_1(r; f_r^{*}) + \left( \frac{\mu_2}{\mu_1} - 1 \right) h(Y | r) - \frac{\lambda}{\mu_1} r^2 \right) dr > J_1.
\end{equation}
This result directly contradicts the foundational condition established in~\eqref{eq:cond11}, which asserts that the integral must not exceed $J_1$. 
Thus, \eqref{eq:condtion11} is valid.

Next, we establish the validity of~\eqref{eq:condtion12} in Theorem~\ref{theorem:feature}, again employing a proof by contradiction.
Assume that there exists a point $\tilde{r}$ in the support $\mathcal S_{f_r^*}$ such that the following strict inequality holds:
\begin{equation} \label{eq:hypo_lower}
\omega_1(\tilde{r}; f_r^{*}) < J_1 - \left( \frac{\mu_2}{\mu_1} - 1 \right) h(Y | \tilde{r}) + \frac{\lambda}{\mu_1} \tilde{r}^2.
\end{equation}
By the definition of a point of increase, any neighborhood $\tilde{\mathcal{S}} \subset \mathcal S_{f_r^*}$ centered at $\tilde{r}$ must have a strictly positive probability measure, denoted by $\int_{\tilde{\mathcal{S}}} f_r^* dr = \varrho > 0$. Given the continuity of the functional components, there exists a sufficiently small neighborhood $\tilde{\mathcal{S}}$ such that~\eqref{eq:hypo_lower} holds for all $r \in \tilde{\mathcal{S}}$.
Under this assumption, we evaluate $J_1$ by partitioning the support into $\tilde{\mathcal{S}}$ and its complement $\mathcal S_{f_r^*} \setminus \tilde{\mathcal{S}}$:
\begin{align*}
J_1 &= \int_{\mathcal S_{f_r^*}} f_r^{*} \left( \omega_1(r; f_r^{*}) + \left( \frac{\mu_2}{\mu_1} - 1 \right) h(Y | r) - \frac{\lambda}{\mu_1} r^2 \right) dr \\
&= \int_{\tilde{\mathcal{S}}} f_r^{*} \left( \omega_1(r; f_r^{*}) + \left( \frac{\mu_2}{\mu_1} - 1 \right) h(Y | r) - \frac{\lambda}{\mu_1} r^2\right) dr \\
&+ \int_{\mathcal S_{f_r^*} \setminus \tilde{\mathcal{S}}} f_r^{*} \left( \omega_1(r; f_r^{*}) + \left( \frac{\mu_2}{\mu_1} - 1 \right) h(Y | r) - \frac{\lambda}{\mu_1} r^2 \right) dr \\
&< \int_{\tilde{\mathcal{S}}} f_r^{*} J_1 dr + \int_{\mathcal S_{f_r^*} \setminus \tilde{\mathcal{S}}} f_r^{*} J_1 dr \\
&= \varrho J_1 + (1 - \varrho) J_1 = J_1.
\end{align*}
The resulting strict inequality $J_1 < J_1$ is a clear contradiction. Thus, we conclude that the condition in~\eqref{eq:condtion12} must hold for all $r$ in the support $\mathcal S_{f_r^*}$.

\section{}\label{proof:real}

To prove that $A(r)$ is real analytic in $r$, it suffices to show that $A(r)$ can be represented by a power series with a non-zero radius of convergence.
Recall that $A(r) =  -\int_{\mathbb C} K(y, r)\log f_y(y) dy$, where $K(y, r)\! \triangleq\!\frac{1}{2\pi}\int_{\mathbb C}\!\int_{-\pi}^{\pi}\! \frac{f_{X_2}(x_2)}{\pi\sigma^2} \exp(-\frac{|y - (a+x_2)re^{j\theta_1}|^2}{\sigma^2}) d\theta_1dx_2$.
Let $\alpha(y, X_2, \theta_1) \triangleq \frac{2\text{Re}[y^H(a+X_2)e^{j\theta_1}]}{\sigma^2}$ and $\beta(X_2) \triangleq \frac{|a+X_2|^2}{\sigma^2}$, yielding
\begin{align*}
K(y, r)=\int_{\mathbb C}\!\int_{-\pi}^{\pi}\! \frac{f_{X_2}(x_2)}{2\pi^2\sigma^2} e^{-\frac{|y|^2}{\sigma^2}} \exp(\alpha r - \beta r^2) d\theta_1dx_2.
\end{align*}
Since $\exp(x)$ is an entire function, the term $\exp(\alpha r - \beta r^2)$ can be expanded into a power series $\sum_{n=0}^{\infty} d_n(\alpha, \beta) r^n$ that converges absolutely for all $r \in \mathbb{R}$.

For each fixed $y$, the coefficients $d_n(\alpha, \beta)$ are bounded because $X_2$ and $\theta_1$ reside in compact sets. Consequently, the series $\sum_{n=0}^{\infty} d_n r^n$ converges uniformly with respect to $(X_2, \theta_1)$. By the Dominated Convergence Theorem, we can interchange the integral and the summation
\begin{align}\label{eq:Kyr}
K(y,r) &= \sum_{n=0}^{\infty} \left[\int_{\mathbb C}\!\int_{-\pi}^{\pi}\frac{f_{X_2}(x_2)}{2\pi^2\sigma^2} e^{-\frac{|y|^2}{\sigma^2}} d_n(\alpha, \beta) d\theta_1 dx_2 \right] r^n \nonumber\\
& \triangleq \sum_{n=0}^{\infty} K_n(y) r^n.
\end{align}

The final step requires interchanging the integral over $y$ with the summation over $n$, yielding
\begin{align*}
A(r) = -\int_{\mathbb C} \left( \sum_{n=0}^{\infty} K_n(y) r^n \right) \log f_Y(y) dy.
\end{align*}
The interchange is valid if the series $\sum_{n=0}^{\infty} |K_n(y) \log f_Y(y) r^n|$ is integrable. Given that $-\log f_Y(y)$ grows at most quadratically as $|y| \to \infty$, as proved in Lemma \ref{lem:gaussian-lower-envelope}, and $K_n(y)$ contains the Gaussian factor $e^{-{|y|^2}/{\sigma^2}}$, the integrand is dominated by an integrable Gaussian-type function. Thus, $A(r)$ can be expressed as a power series $A(r) = \sum_{n=0}^\infty A_n r^n$, where $A_n \triangleq -\int_{\mathbb{C}} K_n(y) \log f_Y(y) dy$. Since this series converges for all $r$ within its radius of convergence, $A(r)$ is real analytic.

Similarly, to prove that $B(r)$ is real analytic, it suffices to show that $h(Y|r)$ admits a power series expansion in $r$ with a non-zero radius of convergence. Recall that $h(Y|r) = -\int K(y,r) \log K(y,r) dy$.
According to~\eqref{eq:Kyr}, for a fixed $y$, as $r \to 0$, the PDF $K(y,r)$ approaches $K_0(y)$, where $K_0(y)\triangleq\frac{1}{\pi\sigma^2} e^{-|y|^2/\sigma^2} > 0$ for all $y \in \mathbb{C}$, yielding
\begin{align}\label{eq:log}
\log K(y,r) = \log K_0(y) + \log\left( 1 + \sum_{n=1}^{\infty} \frac{K_n(y)}{K_0(y)} r^n \right).
\end{align}
Using the Taylor expansion $\log(1+x) = \sum_{k=1}^\infty \frac{(-1)^{k+1}}{k} x^k$, which converges absolutely for $|x| < 1$, we can expand the second term in~\eqref{eq:log} into a power series $\sum_{m=1}^\infty h_m(y) r^m$. This expansion is valid for $|r| < \rho_y$. Since $V_0$ is bounded, one can show that there exists a uniform $\rho > 0$ such that the series converges for all $y$ within any compact subset of $\mathbb{C}$.

The integrand in $h(Y|r)$ is the product of two power series, which is given by
$\left( \sum_{n=0}^{\infty} K_n(y) r^n \right) \left( \log K_0(y) + \sum_{m=1}^{\infty} h_m(y) r^m \right) = \sum_{k=0}^{\infty} \phi_k(y) r^k,
$
where $\phi_k(y) \triangleq K_k(y) \log K_0(y) + \sum_{n+m=k, m \geq 1} K_n(y) h_m(y)$.

To conclude that $h(Y|r) = \sum_{k=0}^\infty \left[ -\int \phi_k(y) dy \right] r^k$ is analytic, we must justify the interchange of the integral and the infinite sum.
The term $K_n(y)$ decays as $O(e^{-|y|^2/\sigma^2})$ and $\log K_0(y)$ grows as $O(|y|^2)$. Thus, each $\phi_k(y)$ is integrable.
Since the original kernel is an entire function of $r$ and the logarithm is analytic away from zero, the composite function is analytic. 
The series $\sum \phi_k(y) r^k$ is dominated by an integrable Gaussian-type function for sufficiently small $r$.
By the Dominated Convergence Theorem, the interchange is valid, confirming that $h(Y|r)$, and thus $B(r)$, is real analytic at $r=0$.

\section{}\label{proof:x1}

To analyze the properties of the optimal input distribution, we decompose the KKT condition \eqref{eq:condtion12} by defining $A(r)\triangleq \omega_1(r;f_r^{*})$ and $B(r)\triangleq \left(\frac{\mu_2}{\mu_1}-1\right)h(Y|r)$. Consequently, for any $r \in \mathcal{S}_{f_r^*}$, the following stationarity condition must hold
\begin{align}
A(r)+B(r)-\frac{\lambda}{\mu_1} r^2-J_1 = 0,  r\in\mathcal S_{f_r^*}.
\end{align}
The term $A(r)$ represents the marginal cross-entropy $A(r) = -\int K(y,r)\log f_Y(y)dy = -\mathbb E [\log f_Y(Y_r)]$,
where $Y_r = (a+X_2)r e^{j\theta_1} + Z$ denotes the conditional received signal given the PT input amplitude $|X_1|=r$. Here, the expectation is taken over the joint distribution of $(\theta_1,X_2, Z)$.

\begin{lemma}\label{lem:gaussian-lower-envelope}
Suppose that $X_1$ possesses a finite exponential moment, i.e., there exists a sufficiently small $\alpha > 0$ such that $\mathbb{E}[e^{\alpha |X_1|^2}] < \infty$. Then, the marginal output PDF $f_Y(y)$ satisfies a Gaussian-type lower envelope:
\begin{equation}\label{eq:GaussianEnvelope}
-\log f_Y(y) \geq c_1 |y|^2 - c_2, 
\end{equation}
where $c_1 > 0$ and $c_2 \in \mathbb{R}$ are constants depending on $\sigma^2$ but independent of the specific realization of $y$.
\end{lemma}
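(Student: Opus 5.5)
The plan is to bound $f_Y$ from above by a Gaussian in $|y|$. The only ingredients are the representation of $f_Y$ as an average of shifted Gaussian kernels and the assumed exponential moment of $X_1$.

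First, I write the composite noiseless signal as $W \triangleq (a+X_2)X_1$. Since $Z$ is independent of $(X_1,X_2)$, we have $f_Y(y) = \mathbb{E}_W\bigl[\frac{1}{\pi\sigma^2}\exp(-|y-W|^2/\sigma^2)\bigr]$. The peak constraint $|X_2|\le 1$ gives $|W|\le (1+|a|)\,|X_1|$ almost surely. This lets me control $W$ entirely through $|X_1|$, whatever the BD distribution is. That matters because the constants must not depend on $f_{X_2}$, so that Lemma~\ref{lem:gaussian-lower-envelope} can be used inside Appendix~\ref{proof:real} for the optimal pair.

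The key step is a parameterized triangle-type inequality. For any $t>0$,
\[
|y|^2 \le (1+t)|y-W|^2 + (1+t^{-1})|W|^2 ,
\]
which follows from $2|\langle u,v\rangle|\le t|u|^2+t^{-1}|v|^2$. It implies $|y-W|^2 \ge \frac{|y|^2}{1+t} - \frac{|W|^2}{t}$. Substituting into the kernel gives
\[
f_Y(y) \le \frac{1}{\pi\sigma^2}\, e^{-\frac{|y|^2}{(1+t)\sigma^2}}\;\mathbb{E}\Bigl[e^{\frac{(1+|a|)^2|X_1|^2}{t\sigma^2}}\Bigr].
\]

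The main obstacle is the crude choice $t=1$. It would require the exponential moment at level $(1+|a|)^2/\sigma^2$, whereas the hypothesis only gives it for some small $\alpha$. The freedom in $t$ resolves this. I choose $t \ge (1+|a|)^2/(\alpha\sigma^2)$, so the expectation is at most $M_\alpha \triangleq \mathbb{E}[e^{\alpha|X_1|^2}]<\infty$.

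Taking $-\log$ then yields~\eqref{eq:GaussianEnvelope} with
\[
c_1 = \frac{1}{(1+t)\sigma^2}>0, \qquad c_2 = \log M_\alpha - \log(\pi\sigma^2).
\]
These constants depend only on $\sigma^2$, $a$ and $\alpha$ (through $M_\alpha$), not on $y$.

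Finally, Appendix~\ref{proof:real} also uses the complementary fact that $-\log f_Y(y)$ grows at most quadratically. So I will note the matching upper bound, obtained from Jensen's inequality applied to the convex function $-\log$:
\[
-\log f_Y(y) \le \log(\pi\sigma^2) + \frac{\mathbb{E}|y-W|^2}{\sigma^2} \le \log(\pi\sigma^2) + \frac{2|y|^2 + 2(1+|a|)^2 P}{\sigma^2},
\]
which uses only $\mathbb{E}|X_1|^2\le P$. Together the two bounds sandwich $-\log f_Y$ between quadratics. That is what the dominated-convergence interchanges of Appendix~\ref{proof:real} require.
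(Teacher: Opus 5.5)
Your proposal is correct and follows essentially the paper's argument. The paper bounds $|y-x|^2 \ge |y|^2+|x|^2-2|y||x|$ and then applies Young's inequality $2|y||x|\le |y|^2/\epsilon+\epsilon|x|^2$ with $\epsilon$ slightly above $1$; under the reparametrization $t=1/(\epsilon-1)$ this is exactly your weighted inequality, giving the same $c_1=1/((1+t)\sigma^2)$ and the same moment level $1/(t\sigma^2)$, and both proofs use the same bound $|W|\le(1+|a|)|X_1|$. Your extra Jensen bound, showing $-\log f_Y(y)$ grows at most quadratically, is not part of the lemma, but it is a useful addition, because Appendix~\ref{proof:real} actually needs that upper bound and cites this lemma for it.
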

\begin{IEEEproof}
We define $X \triangleq (1+aX_2)X_1$. The marginal output PDF $f_Y(y)$ is the convolution of the input distribution and the AWGN noise, which can be expressed as
\begin{align}
f_Y(y) = \int_{\mathbb{C}} f_Z(y-x) f_X(x) dx = \mathbb{E}[f_Z(y-X)],
\end{align}
where $f_Z(z) = \frac{1}{\pi\sigma^2} \exp(-|z|^2/\sigma^2)$. By applying the triangle inequality $|y-x|^2 \le (|y| + |x|)^2$, we obtain the following bound for the noise PDF
\begin{align*}
f_Z(y-x)
&=
\frac{1}{\pi\sigma^2}
\exp\!\left(-\frac{|y|^2+|x|^2-2\Re (yx^{H})}{\sigma^2}\right)\\
&\leq
\frac{1}{\pi\sigma^2}
\exp\!\left(-\frac{|y|^2+|x|^2-2|y||x|)}{\sigma^2}\right).
\end{align*}
Substituting this into the expression for $f_Y(y)$ yields
\begin{align}\label{eq:f_Y_bound}
f_Y(y)&\le
\frac{e^{-|y|^2/\sigma^2}}{\pi\sigma^2}
\mathbb{E}\!\left[
\exp\!\left(\frac{2|y||X|-|X|^2}{\sigma^2}\right)
\right].
\end{align}

To bound the expectation term, we invoke Young's inequality in the form $2|y||X| \leq \frac{|y|^2}{\epsilon} + \epsilon |X|^2$ for any $\epsilon > 0$, yielding
\begin{align}
\exp\!\left(\frac{2|y||X|}{\sigma^2}-\frac{|X|^2}{\sigma^2}\right) \leq 
\exp\!\left(\frac{|y|^2}{\epsilon\sigma^2}\right)\exp\!\left(-\frac{(1-\epsilon)|X|^2}{\sigma^2}\right)
\end{align}
Substituting this into~\eqref{eq:f_Y_bound}, the PDF is bounded by
\begin{align}
f_Y(y)
\le
\frac{1}{\pi\sigma^2}\exp\!\left(-\frac{|y|^2}{\sigma^2}+\frac{|y|^2}{\epsilon\sigma^2}\right)\mathbb{E}\!\left[\exp\!\left(-\frac{(1-\epsilon)|X|^2}{\sigma^2}\right)\right].
\end{align}

By taking $c_1 = \frac{\epsilon-1}{\epsilon\sigma^2}$ and $\beta = \epsilon c_1$, we have 
\begin{align}
f_Y(y)\le\frac{1}{\pi\sigma^2}\exp\!\left(-c_1|y|^2\right)\mathbb{E}\!\left[\exp\!\left(\beta|X|^2\right)\right].
\end{align}
Since the above inequality holds for all $\epsilon \ge 0$, we take $ \epsilon \ge 1$ to make sure $c_1>0$ and $\beta>0$.

Given the peak amplitude constraint $|X_2|\leq 1$, it follows that $|X| \leq (1+a)|X_1|$. Thus, $|X|^2 \leq (1+a)^2 |X_1|^2$. 
By the lemma's assumption, $X_1$ possesses a finite exponential moment $\mathbb{E}[e^{\alpha |X_1|^2}] < \infty$, yielding $\mathbb{E}\!\left[e^{\beta|X|^2}\right]$ is finite if $\beta (1+|a|)^2 \leq \alpha$.
By selecting $\epsilon$ sufficiently close to $1$ from above, the constant $\beta$ can be made arbitrarily small to satisfy the condition $\beta(1+|a|)^2 \leq \alpha$. Under this condition, the expectation $C \triangleq \frac{1}{\pi\sigma^2} \mathbb{E}[e^{\beta |X|^2}]$ is guaranteed to be finite. 
Consequently, we obtain the universal upper bound $f_Y(y) \leq C \exp(-c_1 |y|^2)$. 
Taking the negative logarithm on both sides yields the desired lower envelope 
\begin{align}
-\log f_Y(y) \geq c_1 |y|^2 - c_2,
\end{align}
where $c_2 \triangleq \log C$. This concludes the proof.
\end{IEEEproof}

Using Lemma \ref{lem:gaussian-lower-envelope}, we can bound $A(r)$ as $A(r)= -\mathbb{E}[\log f_Y(Y_r)]\geq c_1\,\mathbb{E}|Y_r|^2 - c_2$, where the constants $c_1 > 0$ and $c_2 \in \mathbb{R}$ are independent of $r$. Substituting the second moment $\mathbb{E}[|Y_r|^2] = \mathbb{E}[|a+X_2|^2]r^2 + \sigma^2$ into the inequality yields
\begin{align} \label{eq:lowerAr}
A(r) \geq c_1 \mathbb{E}[|a+X_2|^2] r^2 + c_1 \sigma^2 - c_2.
\end{align}
This indicates that $A(r)$ grows at least quadratically with $r$.

We now analyze the asymptotic behavior of $B(r)$. By scaling the received signal as $Y_r = r [ (a+X_2)e^{j\theta_1} + \frac{Z}{r} ]$, we can write
\begin{align*}
B(r) & = \left(\frac{\mu_2}{\mu_1}-1\right)h(Y|r) \overset{(a)}{=} \left(\frac{\mu_2}{\mu_1}-1\right)(h(V_r)+2\log r),
\end{align*}
where $V_r \triangleq V_0 + \frac{Z}{r}$ and $V_0 \triangleq (a+X_2)e^{j\theta_1}$. Equality $(a)$ follows from the scaling property of differential entropy in the complex plane, i.e., $h(\alpha \mathbf{X}) = h(\mathbf{X}) + 2\log |\alpha|$.

As $r \to \infty$, the term $Z/r$ vanishes for any realization of $Z$. To establish the convergence of the corresponding densities, we examine the PDF of $V_r$, denoted by $f_{V_r}(\cdot)$. Due to the additive noise structure, $f_{V_r}$ is given by the convolution
\begin{equation}\label{eq:conv}
f_{V_r}(v) = \int_{\mathbb{C}} f_{V_0}(u) f_{Z/r}(v-u) du = (f_{V_0} * f_{Z/r})(v),
\end{equation}
where $f_{Z/r}(z) = \frac{r^2}{\pi\sigma^2} \exp\left(-\frac{r^2|z|^2}{\sigma^2}\right)$. This Gaussian kernel forms an approximate identity as $r \to \infty$~\cite{folland1999real}. Consequently, at every Lebesgue point of $f_{V_0}$, we have $f_{V_r}(v) \to f_{V_0}(v)$ pointwise almost everywhere. Since $f_{V_r}$ and $f_{V_0}$ are valid PDFs, implying $\|f_{V_r}\|_1 = \|f_{V_0}\|_1 = 1$, it follows from Scheffé's Theorem that the densities converge in $L^1$ norm:
\begin{equation}
\lim_{r \to \infty} \int_{\mathbb{C}} |f_{V_r}(v) - f_{V_0}(v)| dv = 0.
\end{equation}

Combining the aforementioned arguments, we conclude that $f_{V_r}$ converges to $f_{V_0}$ both pointwise almost everywhere and in $L^1$ norm as $r \to \infty$. This convergence justifies the exchange of the limit and the integral in the entropy functional. Specifically, since $f_{V_r} \to f_{V_0}$ a.e., the term $f_{V_r} \log f_{V_r}$ also converges to $f_{V_0} \log f_{V_0}$ a.e. Furthermore, as shown in Appendix~\ref{proof:upper}, the sequence $|f_{V_r} \log f_{V_r}|$ is dominated by an integrable function, allowing the application of the dominated convergence Theorem~\cite{folland1999real}:
\begin{equation*}
-\int_{\mathbb{C}} f_{V_r}(v) \log f_{V_r}(v) dv \xrightarrow[r \to \infty]{} -\int_{\mathbb{C}} f_{V_0}(v) \log f_{V_0}(v) dv.
\end{equation*}
Consequently, the continuity of differential entropy under $L^1$ convergence and uniform power constraints implies $h(V_r) \to h(V_0) = h((a+X_2)e^{j\theta_1})$. 
Thus, the asymptotic behavior of $h(Y_r)$ as $r \to \infty$ is characterized by:
\begin{equation}\label{eq:asy}
B(r) = \left(\frac{\mu_2}{\mu_1}-1\right)(2\log r + C_0) + o(1),
\end{equation}
where the constant $C_0 \triangleq h((a+X_2)e^{j\theta_1})$ represents the differential entropy of the noise-free normalized signal, and $o(1)$ denotes a term that vanishes as $r \to \infty$.

Combining the lower bound \eqref{eq:lowerAr} on $A(r)$ with the asymptotic expansion~\eqref{eq:asy} of $B(r)$, we obtain the following characterization for $r \in \mathcal{S}_{f_r^*}$
\begin{align} \label{eq:asy_refined}
 &\psi(r) \triangleq A(r)+B(r)-\frac{\lambda}{\mu_1} r^2-J_1\!\geq\! \left(\!c_1\mathbb{E}|a+X_2|^2-\frac{\lambda}{\mu_1}\right)r^2 \nonumber\\
 &+ \left(\frac{\mu_2}{\mu_1}-1\right)(2\log r + C_0) + \Delta,
\end{align}
where $\Delta \triangleq c_1\sigma^2 - c_2 - J_1$.

From Appendix~\ref{proof:real}, the functions $A(r)$, $B(r)$, and $r^2$ are real-analytic functions of $r$, and thus the function $A(r)+B(r)-\frac{\lambda}{\mu_1} r^2$ is real-analytic on $(0, \infty)$.
Suppose the optimal amplitude distribution $f_r^*$ possesses a continuous component on a set $\mathcal{S}$ with positive Lebesgue measure. The KKT conditions then necessitate that $\psi(r) = 0$ for all $r \in \mathcal{S}$. 
Since any set of positive measure in $(0, \infty)$ contains an accumulation point, the Identity Theorem for real-analytic functions~\cite{folland1999real} implies that $\psi(r)$ must be identically zero over the entire domain $(0, \infty)$.

However, this leads to a direct contradiction with the asymptotic behavior established in~\eqref{eq:asy_refined}. Specifically, for $\frac{\mu_2}{\mu_1} \neq 1$, the function $\psi(r)$ is dominated by either the quadratic term $r^2$ or the logarithmic term $\log r$ as $r \to \infty$. Such a function cannot vanish identically on $(0, \infty)$ unless all coefficients are zero, which is not the case here. Consequently, the support $\mathcal{S}_{f_r^*}$ cannot contain any set of positive measure, forcing the optimal distribution $f_r^*$ to be discrete.

Furthermore, since $\lim_{r\to\infty} |\psi(r)| = \infty$, the zeros of $\psi(r)$ are confined to a compact subset of $(0, \infty)$. Given that the zeros of a non-identically zero analytic function cannot have an accumulation point within its domain, we conclude that the support $\mathcal{S}_{f_r^*}$ is a finite set of discrete points.

\begin{remark}
The above analysis requires that the input distribution of $X_1$ possesses a finite exponential moment. When $X_1$ fails to have such a moment, the KKT equation also cannot be satisfied by any continuous input distribution. Specifically, let $X_{1}$ satisfy $\mathbb{E}[|X_{1}|^{2}]<P$ but have no finite exponential moment, i.e.,
$\mathbb{E}[ e^{\alpha |X_{1}|^{2}} ] = \infty, \forall \alpha>0$.
Hence the Gaussian domination argument fails and the
output density inherits the heavy tail of $X_{1}$, implying that $-\log f_{Y}(y)$ grows strictly slower than any quadratic function of $|y|$.
Therefore $A(r)$ grows strictly slower than $r^{2}$. This contradicts the KKT equation, whose right--hand side contains a term $\frac{\lambda}{\mu_{1}} r^{2}$ and must grow quadratically. Thus, no continuous input PDF can satisfy the KKT condition.
\end{remark}

\section{}\label{proof:upper}

To apply the Dominated Convergence Theorem, we must identify an integrable function $g(v)$ such that $|f_{V_r}(v) \log f_{V_r}(v)| \leq g(v)$ for all $r \geq r_0 > 0$.
From~\eqref{eq:conv}, we observe that $f_{V_r}(v)$ exhibits Gaussian tail decay. Using Lemma~\ref{lem:gaussian-lower-envelope} and the property that $V_0$ is bounded, there exist constants $A, B > 0$ such that for sufficiently large $|v|$:
\begin{align}
  f_{V_r}(v) \leq A \exp(-B |v|^2).
\end{align}
This ensures that $f_{V_r} \log f_{V_r}$ decays faster than any polynomial at infinity.
Since $f_{V_r}(v)$ exhibits Gaussian tail decay of the order $e^{-B|v|^2}$, the term $|f_{V_r} \log f_{V_r}|$ is asymptotically dominated by $|v|^2 e^{-B|v|^2}$. Due to the rapid decay of the exponential function relative to the quadratic term, there exist constants $\tilde{A}$ and $\tilde{B}$ such that $|f_{V_r} \log f_{V_r}| \leq \tilde{A} e^{-\tilde{B}|v|^2}$, which is integrable.
Since the noise variance $\sigma^2/r^2$ is strictly positive for any finite $r$, the PDF $f_{V_r}(v)$ is uniformly bounded, i.e., $f_{V_r}(v) \leq \frac{r^2}{\pi \sigma^2} \leq M$.

By combining the uniform boundedness for small $|v|$ and the Gaussian tail decay for large $|v|$, we can construct
\begin{align}
g(v) = 
\begin{cases} 
M, & |v| \leq R \\ \tilde A e^{-\tilde B |v|^2}, & |v| > R 
\end{cases}
\end{align}
where $M, \tilde A, \tilde B$ and the radius $R$ are constants independent of $r$. We construct $g(v)$ to be uniformly bounded in a compact region 
and exhibit Gaussian decay in the tail, ensuring integrability, i.e., $\int_{\mathbb{C}} g(v) dv < \infty$, and thus the sequence $|f_{V_r} \log f_{V_r}|$ is indeed dominated by an integrable function.
The existence of such an integrable dominator $g(v)$ justifies the exchange of the limit and the integral.

\section{}\label{proof:x2concave}

To prove the concavity of $h(Y|X_1)$ with respect to the distribution $f_{X_2}$, we first observe that the conditional PDF $f_{Y|X_1}(y|x_1)$ is a linear functional of $f_{X_2}$. 
Specifically, given $p(y|x_1, x_2) = \frac{1}{\pi\sigma^2} \exp(-|y - (a+x_2)x_1|^2/\sigma^2)$, we have
\begin{equation}
f_{Y|X_1}(y|x_1) = \int_{\mathbb{C}} p(y|x_1, x_2) f_{X_2}(x_2) dx_2.
\end{equation}
Let $f_{X_2}^\epsilon = \epsilon f_{X_2} + (1-\epsilon) f_{X_2}^*$ be a convex combination of two input distributions. The corresponding output density satisfies
\begin{equation}
f_{Y|X_1}^\epsilon(y|x_1) = \epsilon f_{Y|X_1}(y|x_1) + (1-\epsilon) f_{Y|X_1}(y|x_1).
\end{equation}
Recall that the differential entropy $h(f)$ is a concave functional of the density $f$. Thus, for any fixed $x_1 = re^{j\theta_1}$
\begin{align}
&h(Y|X_1=x_1; f_{X_2}^\epsilon)= h(f_{Y|X_1}^\epsilon) \nonumber\\
\geq &\epsilon h(f_{Y|X_1}) + (1-\epsilon) h(f_{Y|X_1}) \nonumber\\
= &\epsilon h(Y|X_1=x_1; f_{X_2}) + (1-\epsilon) h(Y|X_1=x_1; f_{X_2}).
\end{align}
Taking the expectation over $X_1$ preserves the inequality due to the linearity of the expectation operator
\begin{align}
&h(Y|X_1; f_{X_2}^\epsilon)= \mathbb{E}_{X_1} [h(Y|X_1=x_1; f_{X_2}^\epsilon)] \nonumber\\
\geq& \epsilon h(Y|X_1; f_{X_2}) + (1-\epsilon) h(Y|X_1; f_{X_2}).
\end{align}
This concludes that $h(Y|X_1)$ is concave in $f_{X_2}$. The proof for $h(Y)$ follows a similar logic and is omitted here.

\section{}\label{proof:x2}

To establish the optimality and structure of the BD input distribution, we analyze the analytical properties of the integrand in $\omega_2(X_2; f_{X_2})$. 
Let $X_2 = u + jv$, where $u, v \in \mathbb{R}$. 
The integrand is reformulated as
\begin{equation}\label{eq:integrand_refined_v2}
g(x_1, y; u, v) \triangleq f_{X_1}(x_1) f_Z(y - (a+u+jv)x_1) \kappa(x_1, y),
\end{equation}
where $\kappa(x_1, y) \triangleq \mu_1 \log f_Y(y) + (\mu_2 - \mu_1)\log f_{Y|X_1}(y|x_1)$ is independent of the realization of $X_2$. 
Given the discrete nature of $f_{X_1}$ and the Gaussian noise PDF, for each $x_1 = r_k e^{j\theta_1}$, we have
\begin{align*}
g(x_1, y; u, v) = \frac{p_k}{2\pi^2\sigma^2} \exp \left( E(r_k e^{j\theta_1}, y; u, v) \right) \kappa(r_k e^{j\theta_1}, y),
\end{align*}
where the exponent $E(r_k e^{j\theta_1}, y; u, v) \triangleq -\frac{1}{\sigma^2}|y - (a+u+jv)r_k e^{j\theta_1}|^2$.

Since $E$ is a quadratic polynomial in $u$ and $v$, it is inherently real-analytic on $\mathbb{R}^2$. As the exponential mapping is an entire function, the composition $\exp(E)$ remains real-analytic. Consequently, for any fixed $(x_1, y)$, $g(x_1, y; u, v)$ admits a convergent Taylor series expansion in the neighborhood of any $(u, v) \in \mathbb{R}^2$.

To justify the interchange of the integral and the derivative operators, we propose the following
\begin{corollary}\label{cor:derivative_domination}
For any multi-index $\alpha = (\alpha_1, \alpha_2) \in \mathbb{N}^2$, there exists an integrable function $M_\alpha(x_1, y)$ such that
\begin{equation}
\left| \frac{\partial^{|\alpha|} g}{\partial u^{\alpha_1} \partial v^{\alpha_2}}(x_1, y; u, v) \right| \leq M_\alpha(x_1, y)
\end{equation}
holds uniformly for all $(u, v)$ within the unit disk $\mathcal{D} = \{u^2 + v^2 \leq 1\}$. 
\end{corollary}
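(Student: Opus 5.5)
We bound every mixed partial derivative of $g$ with respect to $(u,v)$ by a function of $(x_1,y)$ that does not involve $X_2$. Since the only $(u,v)$-dependence sits in the Gaussian factor $\exp(E)$, we write $g = f_{X_1}(x_1)\,\kappa(x_1,y)\,\exp(E(x_1,y;u,v))$ and differentiate only the exponential. The exponent has the form
\[
E = -\tfrac{1}{\sigma^2}\bigl(|y|^2 - 2\Re[y^H (a+u+jv)x_1] + |a+u+jv|^2|x_1|^2\bigr),
\]
which is a polynomial of degree two in $(u,v)$. Its first partials are affine in $(u,v)$ with coefficients of order $|x_1||y| + |x_1|^2$. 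Its second partials are constants of order $|x_1|^2$, and all higher partials vanish.

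\textbf{Step 1 (derivative of the exponential).} By Faà di Bruno's formula, or by induction on $|\alpha|$, $\partial^\alpha \exp(E) = P_\alpha(x_1,y;u,v)\exp(E)$, where $P_\alpha$ is a polynomial of degree at most $|\alpha|$ in the first and second partials of $E$. On the compact disk $\mathcal D$ this gives
\[
|P_\alpha| \le C_\alpha\bigl(1+|x_1|+|y|\bigr)^{2|\alpha|}.
\]

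\textbf{Step 2 (uniform Gaussian envelope).} Since $|a+u+jv|\le 1+|a|$ on $\mathcal D$, expanding $-|y-w|^2 \le -\tfrac12|y|^2 + |w|^2$ with $w=(a+u+jv)x_1$ yields
\[
\exp(E) \le \exp\Bigl(-\tfrac{|y|^2}{2\sigma^2} + \tfrac{(1+|a|)^2|x_1|^2}{\sigma^2}\Bigr)
\]
uniformly in $(u,v)\in\mathcal D$.

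\textbf{Step 3 (bound on $\kappa$).} We need $|\kappa(x_1,y)| \le C(1+|y|^2+|x_1|^2)$.
\begin{itemize}
\item The lower envelope $-\log f_Y \ge c_1|y|^2-c_2$ from Lemma~\ref{lem:gaussian-lower-envelope}, together with the trivial bound $f_Y \le 1/(\pi\sigma^2)$, controls $\log f_Y$ from both sides.
\item The same reasoning applies to $f_{Y|X_1}$. Because $f_{Y|X_1}(y|x_1)$ averages $f_Z(y-(a+x_2)x_1)$ over the bounded $x_2$, we have $\log f_{Y|X_1} \ge -\log(\pi\sigma^2) - (|y|+(1+|a|)|x_1|)^2/\sigma^2$.
\end{itemize}

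\textbf{Step 4 (the dominating function).} Setting
\[
M_\alpha = f_{X_1}(x_1)\,C_\alpha\bigl(1+|x_1|+|y|\bigr)^{2|\alpha|+2}\exp\Bigl(-\tfrac{|y|^2}{2\sigma^2}+\tfrac{(1+|a|)^2|x_1|^2}{\sigma^2}\Bigr),
\]
the integral over $y$ is a Gaussian moment, which is finite. The integral over $x_1$ is then a finite sum over the discrete amplitudes $r_k$ established in the previous theorem, since the phase integral ranges over a compact set. Hence $M_\alpha$ is integrable.

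The main obstacle is the $x_1$-integrability in Step 4, because of the factor $e^{(1+|a|)^2|x_1|^2/\sigma^2}$. This factor is harmless for finite support in $r$; this is exactly where the discreteness and finiteness of $f_r^*$ from the preceding theorem enter. If one instead wants a statement for general $f_{X_1}$, we would avoid the crude split in Step 2 and shift variables, $y \mapsto y-(a+u+jv)x_1$. Integrating in $y$ first then leaves a polynomial in $|x_1|$ whose integrability requires only finite moments of $X_1$, together with the exponential-moment assumption of Lemma~\ref{lem:gaussian-lower-envelope}.
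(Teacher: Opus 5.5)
Your overall route is the paper's: you factor out $f_{X_1}\kappa$, apply Fa\`a di Bruno to $e^{E}$ using that $E$ is quadratic in $(u,v)$, and bound the resulting polynomial on $\mathcal D$ by a polynomial in $(|x_1|,|y|)$. You then use a Gaussian envelope that is uniform over $\mathcal D$, bound $|\kappa|$ polynomially, and get integrability from the finitely many amplitudes $r_k$ and the compact phase range. Your Step 2 is a cleaner justification of the paper's step (a). That step cites Lemma~\ref{lem:gaussian-lower-envelope}, which is a statement about $f_Y$, although only the elementary inequality $|y-w|^2\ge\tfrac12|y|^2-|w|^2$ is needed.

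\textbf{The step that fails.} The first bullet of Step 3 does not work as written. Lemma~\ref{lem:gaussian-lower-envelope} gives $\log f_Y(y)\le c_2-c_1|y|^2$, and the trivial bound gives $\log f_Y(y)\le-\log(\pi\sigma^2)$. Both are \emph{upper} bounds on $\log f_Y$, so neither limits how negative $\log f_Y$ can become. The estimate $|\kappa|\le C(1+|y|^2+|x_1|^2)$ needs the opposite inequality $-\log f_Y(y)\le C(1+|y|^2)$, i.e., a Gaussian \emph{lower} bound on the output density. The lemma says $-\log f_Y$ grows at least quadratically, whereas you need it to grow at most quadratically. The paper only asserts that $|\kappa|$ grows polynomially, so this is exactly the point that needs an argument.

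\textbf{The repair.} It uses the idea of your own second bullet. Since $f_Y(y)=\mathbb E_{X_1}[f_{Y|X_1}(y|X_1)]$ and $|X_1|\le r_{\max}\triangleq\max_k r_k$, your pointwise bound yields $-\log f_Y(y)\le\log(\pi\sigma^2)+(|y|+(1+|a|)r_{\max})^2/\sigma^2$. Alternatively, apply Jensen's inequality $\log\mathbb E[f_Z(y-X)]\ge\mathbb E[\log f_Z(y-X)]$ with $X=(a+X_2)X_1$. This gives $-\log f_Y(y)\le\log(\pi\sigma^2)+2(|y|^2+(1+|a|)^2P)/\sigma^2$ using only the power constraint. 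With this in place, Steps 1, 2 and 4 go through as you wrote them.

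\textbf{Minor point on your closing aside.} The substitution $y\mapsto y-(a+u+jv)x_1$ depends on $(u,v)$, so it does not by itself produce a $(u,v)$-independent dominator. For general $f_{X_1}$ you would instead use $\sup_{(u,v)\in\mathcal D}e^{E}=\exp(-\mathrm{dist}(y,B_{x_1})^2/\sigma^2)$, where $B_{x_1}$ is the disk of radius $|x_1|$ centered at $ax_1$. Its polynomially weighted $y$-integrals grow only polynomially in $|x_1|$.
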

\begin{IEEEproof}
Since $\kappa(x_1, y)$ does not depend on $(u, v)$, by using the product rule and chain rule, we have
\begin{equation}
\label{eq:derivative_structure}
\frac{\partial^{|\alpha|} g}{\partial u^{\alpha_1} \partial v^{\alpha_2}} =  \frac{p_k\kappa(x_1, y) }{2\pi^2\sigma^2} \frac{\partial^{|\alpha|}\left[\exp(E(r_k e^{j\theta_1}, y; u, v))\right]}{\partial u^{\alpha_1} \partial v^{\alpha_2}}.
\end{equation}

By the multivariate Fa\`a di Bruno formula, for any multi-index $\alpha=(\alpha_1,\alpha_2)$, the derivative of the exponential can be written as
\begin{equation}\label{eq:faa_di_bruno}
\partial^{|\alpha|} \exp(E(u,v))
= \exp(E(u,v))\,\mathcal{B}_{\alpha}\big(\{\partial^{|\beta|} E(u,v)\}_{1\le|\beta|\le|\alpha|}\big),
\end{equation}
where $\mathcal{B}_{\alpha}(\cdot)$ denotes a multivariate Bell polynomial in the partial derivatives of $E$.

From equations \eqref{eq:derivative_structure} and \eqref{eq:faa_di_bruno}, we have
\begin{align}
&\left|\frac{\partial^{|\alpha|} g}{\partial u^{\alpha_1}\partial v^{\alpha_2}}\right|
\le \frac{p_k|\kappa(x_1,y)|}{2\pi^2\sigma^2}
\exp(E)\times \nonumber\\
&\qquad\qquad\qquad\qquad\qquad\qquad\qquad\Big|\mathcal B_\alpha\big(\{\partial^{|\beta|} E\}_{1\le|\beta|\le|\alpha|}\big)\Big|\nonumber\\
&\overset{(a)}{\leq} \frac{c_1 p_k|\kappa(x_1,y)|}{2\pi^2\sigma^2}
\exp\!\left(-\frac{|y|^2}{2\sigma^2}\right)\,
\Big|\mathcal B_\alpha\big(\{\partial^{|\beta|} E\}_{1\le|\beta|\le|\alpha|}\big)\Big|,
\label{eq:95b}
\end{align}
where (a) follows from Lemma~\ref{lem:gaussian-lower-envelope} in Appendix~\ref{proof:x1}.
Since $E(u,v)$ is quadratic in $(u,v)$, we have $\partial^{|\beta|} E(u,v)\equiv 0$ for all $|\beta|\ge 3$.
Hence, $\mathcal{B}_{\alpha}$ depends only on $\partial E$ and $\partial^2 E$, and $\partial^{|\alpha|} \exp(E)$
can be upper-bounded by $\exp(E)$ times a polynomial in $(u,v)$ of degree at most $2|\alpha|$.
Accordingly, there exists a constant $c_\alpha>0$ and a polynomial $P_\alpha(\cdot,\cdot)$ such that
\begin{align*}
&\Big|B_\alpha\big(\{\partial^{|\beta|} E\}_{1\le|\beta|\le|\alpha|}\big)\Big|\\
\leq& c_\alpha\Big(1+\max_{|\beta|=1}|\partial^{|\beta|} E|^{|\alpha|}
+\max_{|\beta|=2}|\partial^{|\beta|} E|^{|\alpha|}\Big)\\
\leq& c_\alpha{P}_\alpha(r_k e^{j\theta_1}, y),
\end{align*}
where ${P}_\alpha(r_k e^{j\theta_1}, y)$ is a polynomial in $(x_1, y)$ and the polynomial dependence on $(u,v)$ can be absorbed into $c_\alpha$.
Therefore,
\begin{equation*}
M_\alpha(x_1, y) \triangleq \frac{c_1 c_\alpha p_k|\kappa(x_1, y)|}{2\pi^2\sigma^2}  {P}_\alpha(r_k e^{j\theta_1}, y) \exp\left(- \frac{|y|^2}{2\sigma^2}\right).
\end{equation*}

Since $|x_1|=r_k$ is fixed, $P_\alpha(r_ke^{j\theta_1},y)$ is a polynomial in $y$ of degree at most $2|\alpha|$, yielding 
$|P_\alpha(r_k e^{j\theta_1}, y)| \leq A^{|\alpha|} (1 + |y|^{2|\alpha|})$, where $A$ accounts for the number of derivative combinations.
Therefore,
\begin{align*}
&\int_{\mathbb{C}} |P_\alpha(r_k e^{j\theta_1}, y)| \exp\left(-\frac{|y|^2}{2\sigma^2}\right) dy \\
\leq&A^{|\alpha|}  \int_{\mathbb{C}} (1 + |y|^{2|\alpha|}) \exp\left(-\frac{|y|^2}{2\sigma^2}\right) dy < \infty.
\end{align*}
Based on the logarithmic terms in $\kappa(x_1, y)$, $|\kappa(x_1, y)|$ has at most polynomial growth in $|y|$.
Since the integral over $\theta_1 \in [0, 2\pi]$ is over a compact set, the product$
|\kappa(x_1, y)| \cdot (1 + |y|^{2|\alpha|}) \cdot \exp\left(-\frac{|y|^2}{2\sigma^2}\right)$
is integrable over $[0, 2\pi] \times \mathbb{C}$. Therefore, 
$M_\alpha \in L^1([0,2\pi] \times \mathbb{C})$, completing the proof. 
\end{IEEEproof}

By invoking Corollary~\ref{cor:derivative_domination} and the theorem on differentiation under the integral sign, it follows that $\omega_2(u, v)$ is smooth in $(u, v)$. Furthermore, the following proposition establishes its analyticity
\begin{proposition}
\label{prop:taylor_convergence}
For any point $(u_0, v_0) \in \mathbb{R}^2$ within the feasible BD input region, there exists a neighborhood of radius $\rho > 0$ such that the Taylor series
\begin{equation}
\omega_2(u_0 + \Delta_u, v_0 + \Delta_v) = \sum_{|\alpha|=0}^{\infty} \frac{1}{\alpha!}\frac{\partial^{|\alpha|}\omega_2}{\partial u^{\alpha_1}\partial v^{\alpha_2}}(u_0, v_0) \cdot \Delta_u^{\alpha_1}\Delta_v^{\alpha_2}
\end{equation}
converges absolutely for $\|(\Delta_u, \Delta_v)\| < \rho$. Thus, $\omega_2(u, v)$ is real-analytic on $\mathbb{R}^2$.
\end{proposition}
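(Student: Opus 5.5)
We will show that $\omega_2$ extends holomorphically to a complex neighborhood of each real point. The Taylor convergence then follows from Cauchy estimates, rather than from bounding each derivative in Corollary~\ref{cor:derivative_domination} separately. We complexify the real variables, $(u,v)\mapsto(\zeta_1,\zeta_2)\in\mathbb{C}^2$, and replace $x_2=u+jv$ by $w\triangleq\zeta_1+j\zeta_2$ and $\bar x_2$ by $\tilde w\triangleq\zeta_1-j\zeta_2$. These are no longer conjugates, but both are polynomial in $(\zeta_1,\zeta_2)$. The exponent then becomes
\begin{equation*}
E(\zeta_1,\zeta_2)=-\frac{1}{\sigma^2}\Big(|y|^2-2\Re\text{-type bilinear terms in }(a+w)x_1,\,(a+\tilde w)\bar x_1\Big)-\frac{r_k^2}{\sigma^2}(a+w)(a+\tilde w),
\end{equation*}
which is a polynomial in $(\zeta_1,\zeta_2)$ and therefore entire. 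Consequently $\exp(E)$, and hence the integrand $g(x_1,y;\zeta_1,\zeta_2)$, is entire in $(\zeta_1,\zeta_2)$ for each fixed $(x_1,y)$. The factor $\kappa(x_1,y)$ does not depend on $x_2$, so it needs no complexification.

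Next, fix a real point $(u_0,v_0)$ with $u_0^2+v_0^2\le 1$ and consider the polydisc $\Omega_\rho=\{|\zeta_1-u_0|<\rho,\ |\zeta_2-v_0|<\rho\}$. On $\Omega_\rho$ we bound $|\exp(E)|=\exp(\Re E)$. Writing $\zeta_i=$ real part $+\,j\,\eta_i$ with $|\eta_i|<\rho$, the real part of $E$ differs from its value at the real projection by at most $C(1+r_k^2)\rho(|y|+r_k)\rho$ plus $O(r_k^2\rho^2)$, i.e.\ by terms at most linear in $|y|$. Since the real projection lies in a slightly enlarged disk, the uniform Gaussian bound $\exp(-|y|^2/(2\sigma^2)+c|y|)$ used in~\eqref{eq:95b} still holds, with a linear perturbation in $|y|$ that is harmless. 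Combining this with the at most polynomial growth of $|\kappa(x_1,y)|$ (via Lemma~\ref{lem:gaussian-lower-envelope} and the upper bound $f_Y\le C$, $f_{Y|X_1}\le 1/(\pi\sigma^2)$), we obtain a dominating function $M(x_1,y)\in L^1$ that is uniform over $\Omega_\rho$. The amplitude $r_k$ is finite because the support of $f_r^*$ is finite by the preceding theorem, and $\theta_1$ ranges over a compact set.

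Finally, by Morera's theorem (applied in each variable separately, then Hartogs/Osgood) together with dominated convergence, the function $\omega_2(\zeta_1,\zeta_2)=-\int\!\int g\,dy\,dx_1$ is holomorphic on $\Omega_\rho$. Its restriction to real arguments coincides with $\omega_2(u,v)$. The Cauchy estimates on the polydisc give $|\partial^\alpha\omega_2(u_0,v_0)|\le \alpha!\,\sup_{\Omega_\rho}|\omega_2|\,\rho^{-|\alpha|}$, so the Taylor series converges absolutely for $\|(\Delta_u,\Delta_v)\|<\rho$, which is real-analyticity. The main obstacle is the uniform integrable domination on the complex polydisc. Once $x_2$ is complex, the quadratic term $-r_k^2|a+x_2|^2/\sigma^2$ is no longer nonpositive, and the cross term can grow in $|y|$. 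We handle this by keeping $\rho$ small relative to $1/(1+r_k)$ and absorbing the resulting $e^{c\rho r_k|y|}$ growth into the Gaussian factor $e^{-|y|^2/\sigma^2}$, which dominates any linear exponent.
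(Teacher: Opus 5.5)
Your proposal is correct, and it takes a genuinely different route from the paper. The paper works only with real derivatives. It bounds each $\partial^\alpha\omega_2$ through Corollary~\ref{cor:derivative_domination} (Fa\`a di Bruno plus Gaussian moments), obtains $|\partial^\alpha\omega_2(u_0,v_0)|\le D\,(2|\alpha|)!\,A^{|\alpha|}$, and then sums a majorant series using Stirling's formula. You replace this with a single zeroth-order estimate: a uniform integrable dominator for the complexified integrand on a polydisc. Holomorphy (Morera plus Osgood) and the Cauchy estimates then give the sharp $\alpha!\,\rho^{-|\alpha|}$ derivative bounds. They also give, for free, that the Taylor series converges \emph{to} $\omega_2$, which a majorant argument alone does not give without a remainder estimate.

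This is more than a stylistic gain, because a derivative bound of order $(2|\alpha|)!$ is too weak for analyticity. Indeed, $\sum_{\alpha_1+\alpha_2=n}\frac{(2n)!}{\alpha_1!\,\alpha_2!}|\Delta_u|^{\alpha_1}|\Delta_v|^{\alpha_2}=\frac{(2n)!}{n!}(|\Delta_u|+|\Delta_v|)^n$, which grows superexponentially in $n$; the paper's passage to $\frac{(2n)!}{(n!)^2}$ drops a factor of $n!$. To close the direct route one would have to:
\begin{itemize}
\item keep the $|\alpha|!\,C^{|\alpha|}$ growth of the Gaussian moments;
\item track how $c_\alpha$ depends on $\alpha$;
\item use the quadratic growth of $\kappa$ instead of $\|\kappa\|_\infty$.
\end{itemize}
Your approach avoids this bookkeeping entirely.

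Two minor remarks.

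First, your domination step is cruder than necessary. Write $\zeta_1=u'+j\eta_1$, $\zeta_2=v'+j\eta_2$, $x_2'=u'+jv'$ and $\eta=\eta_1+j\eta_2$. Then $w=x_2'+j\eta$, $\tilde w=\bar x_2'+j\bar\eta$, and exactly $\Re E=-|y-(a+x_2')x_1|^2/\sigma^2+r_k^2|\eta|^2/\sigma^2$. So there is no $|y|$-dependent perturbation, and no smallness of $\rho$ relative to $r_k$ is needed. Since the $r_k$ are finitely many, $\omega_2$ in fact extends to an entire function on $\mathbb{C}^2$. In particular your argument works at every real $(u_0,v_0)$, not only in the unit disk. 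That is what the conclusion ``on $\mathbb{R}^2$'' needs, and also what the paper's later $u\to\infty$ argument along the real axis needs.

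Second, the polynomial growth of $|\kappa|$ requires \emph{lower} bounds on the densities. For example, $-\log f_Y(y)\le\log(\pi\sigma^2)+(|y|+R)^2/\sigma^2$ with $R=(|a|+1)\max_k r_k$, and likewise for $f_{Y|X_1}$; both follow from $|(a+X_2)X_1|\le R$. Lemma~\ref{lem:gaussian-lower-envelope} and $f_Y\le C$ only control the other side of $\kappa$.
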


\begin{IEEEproof}
We establish the real analyticity of $\omega_2$ by proving the existence of a non-zero radius of convergence for its Taylor series. 
Fix $(u_0, v_0)$ and define the partial derivatives based on Corollary~\ref{cor:derivative_domination}. 
Using the Gaussian moment properties and the bounded support of $f_{X_1}$, we have
\begin{align*}
&\left|\frac{\partial^{|\alpha|}\omega_2}{\partial u^{\alpha_1}\partial v^{\alpha_2}}(u_0, v_0)\right| \leq \sum_{k=1}^{K}\iint M_\alpha(r_k\theta_1, y)\, dy\, d\theta_1 \\
\leq &\!\sum_{k=1}^{K}\!\frac{c_1 c_\alpha p_kA^{|\alpha|} }{2\pi^2\sigma^2}\!\! \!\iint\! |\kappa(r_k\theta_1, y)| (1 + |y|^{2|\alpha|}) \exp\!\left(\! \frac{-|y|^2}{2\sigma^2}\!\right)\! dy d\theta_1\\
\leq& \sum_{k=1}^{K}\frac{c_1 c_\alpha p_kA^{|\alpha|} }{2\pi^2 \sigma^2}\|\kappa\|_\infty \!\int_0^{2\pi} \!d\theta_1\!\int_{\mathbb C}\! (1 + |y|^{2|\alpha|}) \exp\!\left(\!\! \frac{-|y|^2}{2\sigma^2}\!\right)\! dy\\
\leq& \sum_{k=1}^{K}\frac{c_1 c_\alpha p_kA^{|\alpha|}}{\pi \sigma^2}   \|\kappa\|_\infty   (c_0+2\pi \sigma^{2|\alpha|+2} \cdot 2^{|\alpha|} \cdot |\alpha|!) \\
\leq& \sum_{k=1}^{K}\frac{c_1 c_\alpha p_kA^{|\alpha|}}{\pi \sigma^2} \|\kappa\|_\infty   (2|\alpha|)! \\
=& D \cdot (2|\alpha|)! \cdot A^{|\alpha|},
\end{align*}
where $D = \sum_{k=1}^{K}\frac{c_1 c_\alpha p_k \|\kappa\|_\infty}{\pi \sigma^2}$.
The Taylor series expansion around $(u_0, v_0)$ is bounded by
\begin{align*}
\mathcal{S} \triangleq \sum_{n=0}^{\infty} \sum_{\alpha_1+\alpha_2=n} \frac{1}{\alpha_1! \alpha_2!} \left| \partial^\alpha \omega_2 \right| |\Delta_u|^{\alpha_1} |\Delta_v|^{\alpha_2}.
\end{align*}
Applying the multinomial expansion and Stirling's approximation, we obtain
\begin{align*}
\mathcal{S}\leq& \sum_{|\alpha|=0}^{\infty} \frac{D \cdot (2|\alpha|)! \cdot A^{|\alpha|}}{\alpha_1! \alpha_2!} |\Delta u|^{\alpha_1}|\Delta v|^{\alpha_2} \nonumber\\
\leq& D \sum_{n=0}^{\infty} \frac{(2n)! \cdot A^n}{n!} \sum_{\alpha_1 + \alpha_2 = n} \frac{|\Delta u|^{\alpha_1}|\Delta v|^{\alpha_2}}{\alpha_1! \alpha_2! / n!} \nonumber\\
= & D \sum_{n=0}^{\infty} \frac{(2n)! \cdot A^n}{(n!)^2} (|\Delta u| + |\Delta v|)^n \nonumber\\
\overset{(a)}{\sim} & D \sum_{n=0}^{\infty} \frac{(4A)^n}{\sqrt{\pi n}} (|\Delta u| + |\Delta v|)^n,
\end{align*}
where (a) holds by using Stirling's approximation, i.e., $(2n)! \sim \sqrt{4\pi n}(2n/e)^{2n}$ and $(n!)^2 \sim 2\pi n (n/e)^{2n}$.
We can find that $\sum_{|\alpha|=0}^{\infty} \frac{\bar{c}_\alpha}{\alpha!} |\Delta u|^{\alpha_1}|\Delta v|^{\alpha_2}$ converges for $|\Delta u| + |\Delta v| < \frac{1}{4A}$. Therefore, the Taylor series converges in a neighborhood of $(u_0, v_0)$, establishing real analyticity.
\end{IEEEproof}

Based on the preceding analysis, $\omega_2(u, v)$ is a real-analytic function of the real and imaginary components of $X_2 = u + jv$ and non-constant.
Consequently, the support of the optimal BD input distribution, $\mathcal{S}_{f_{X_2}^*}$, is confined to the zero set of $\omega_2(u, v) - J_2$.
According to the properties of real-analytic sets, this zero set is a semi-analytic subset of $\mathbb{R}^2$. In general, such a set can only consist of
isolated points or pieces of real-analytic curves.
Critically, unless $\omega_2(u, v)$ is identically constant, its zero set has zero two-dimensional Lebesgue measure in $\mathbb{R}^2$. This formally proves that the optimal distribution $f_{X_2}^*$ cannot possess a continuous density over any area in the complex plane, but instead must be concentrated on discrete points or along specific geometric trajectories (e.g., concentric circles or arcs).

To further characterize the support, we observe that the contribution from $h(Y)$ depends on $X_2$ primarily through the distribution of the term $|a + X_2|$, which exhibits rotational symmetry along circles centered at $-a$. In contrast, the contribution from $h(Y|X_1)$ depends on $X_2$ via $|X_2|$, remaining invariant along circles centered at the origin. Therefore, the geometry of the optimal support $\mathcal{S}_{f_{X_2}^*}$ is governed by the interplay between these two distinct families of level sets: circles centered at $-a$ and those centered at the origin.

According to~\eqref{eq:opt_x2_2}, the support $\mathcal{S}_{f_{X_2}^*}$ of the optimal distribution must be contained within the set of points where $\Psi \triangleq\omega_2(X_2; f_{X_2}^*) - T_1 = 0$.
We decompose $\omega_2(X_2)$ into two components with distinct centers of symmetry:
\begin{align}\label{eq:symmetry}
\omega_2(X_2) = \Psi_1(|X_2 + a|^2) + \Psi_2(|X_2|^2),
\end{align}
where $\Psi_1 \triangleq -\int_{\mathbb{C}} \int_{\mathbb{C}} f_{X_1}(x_1) f_Z(y-(a+X_2)x_1) \mu_1\log f_Y(y) dy dx_1$ represents the contribution of $h(Y)$ and $\Psi_2 \triangleq -\int_{\mathbb{C}} \int_{\mathbb{C}} f_{X_1}(x_1) f_Z(y-(a+X_2)x_1)(\mu_2-\mu_1)\log f_{Y|X_1}(y|x_1) dy dx_1$ represents the contribution of $h(Y|X_1)$. Let $d_1 = |X_2 + a|^2 = (u+a)^2 + v^2$ and $d_2 = |X_2|^2 = u^2 + v^2$. If $\mathcal{S}_{f_{X_2}^*}$ contains a smooth curve $\mathcal{L}$, the gradient $\nabla \omega_2$ must vanish or remain perpendicular to the curve at every point on $\mathcal{L}$.

Calculating the gradient, we have
\begin{align}\label{eq:gradient}
\nabla \omega_2 = \Psi_1'(d_1) 
\begin{pmatrix} 2(u+a) \\ 2v \end{pmatrix} 
+ \Psi_2'(d_2) \begin{pmatrix} 2u \\ 2v 
\end{pmatrix}.
\end{align}
For the two vectors $\mathbf{v}_1 = [u+a, v]^T$ and $\mathbf{v}_2 = [u, v]^T$ to be linearly dependent, their cross product must be zero:
\begin{align}\label{eq:cross}
(u+a)v - uv = av = 0.
\end{align}

Given $a \neq 0$, this condition is satisfied only on the real axis, i.e., $v=0$. If $v \neq 0$, the gradients $\nabla d_1$ and $\nabla d_2$ are linearly independent since their cross product is not equal to zero from~\eqref{eq:cross}. A constant value of $\omega_2$ along a curve would require $\Psi_1' = \Psi_2' = 0$. However, since $\omega_2(X_2)$ is real-analytic and non-constant, if there were a continuous curve $\mathcal{L}$ such that $\Psi_1' = 0$ and $\Psi_2' = 0$ at every point on $\mathcal{L}$, the Identity Theorem would imply that $\Psi_1$ and $\Psi_2$ are constants over their entire domain. This directly contradicts the asymptotic divergence of the functional. Therefore, the set of points where the gradients are either linearly dependent or zero cannot form a curve, forcing the support $\mathcal{S}_{f_{X_2}^*}$ to be a set of isolated, and thus finite, discrete points.

If the support contains an interval on the real axis, i.e., $v=0$, then $\omega_2(u, 0) = \Psi_1((u+a)^2) + \Psi_2(u^2) = T_1$ on that interval. By the Identity Theorem for real-analytic functions of one variable, $\omega_2(u,0)$ would have to be constant for all $u \in \mathbb{R}$. This contradicts the established asymptotic divergence as $u \to \infty$.

Having excluded two-dimensional and one-dimensional structures, the zero set must consist solely of isolated points. 
Since the optimal distribution must reside within the compact set $\mathcal{D} = \{X_2: |X_2| \leq 1\}$, any set of isolated points within this bounded region must be finite by the Bolzano-Weierstrass Theorem.
Thus, the support $\mathcal{S}_{f_{X_2}^*}$ is a finite discrete set of points.

If $a=0$, the two centers of symmetry coincide at the origin, leading to $\nabla \omega_2 = [ \Psi_1'(|X_2|^2) + \Psi_2'(|X_2|^2) ] \mathbf{v}_2$. In this case, $\omega_2$ remains constant along any circle $|X_2| = \rho$, allowing for the concentric circle support structures mentioned previously. However, the presence of $a \neq 0$ breaks this rotational invariance, forcing the support to collapse into finite points.

\bibliographystyle{IEEEtran}
\bibliography{ref_AmBC}

\end{document}